\newcommand{\defn}[1]{{\textit{\textbf{\boldmath #1}}}\xspace}
\newcommand{\waste}{\texttt{waste}}
\newcommand{\simple}{\textsf{SIMPLE}\xspace}
\newcommand{\trash}{\textsf{RSUM}\xspace}
\newcommand{\geo}{\textsf{GEO}\xspace}
\newcommand{\tinyhash}{\textsf{TINYHASH}\xspace}
\newcommand{\flexhash}{\textsf{FLEXHASH}\xspace}
\newcommand{\bigO}{O}
\DeclareMathOperator{\E}{\mathbb{E}}
\DeclareMathOperator{\polylog}{\text{polylog}}
\DeclareMathOperator{\poly}{\text{poly}}
\newcommand{\interior}[1]{ {\kern0pt#1}^{\mathrm{o}} }
\newcommand{\eps}{\varepsilon}
\renewcommand{\epsilon}{\varepsilon}
\newcommand{\setof}[2]{\left\{ #1\; \mid \;#2 \right\}}
\newcommand{\set}[1]{\left\{ #1\right\}}
\newcommand{\I}{\mathcal{I}}
\renewcommand{\H}{\mathsf{H}}
\newcommand{\R}{\mathbb{R}}
\newcommand{\Z}{\mathbb{Z}}
\newcommand{\N}{\mathbb{N}}
\newcounter{HALG@line}
\renewcommand{\theHALG@line}{\thealgorithm.\arabic{ALG@line}}
\newcommand{\floor}[1]{\lfloor #1 \rfloor}
\newcommand{\ceil}[1]{\lceil #1 \rceil}
\newcommand{\dceil}[1]{\lceil #1 \rceil}
\newcommand{\paren}[1]{\left( #1 \right)}
\crefname{equation}{}{} % cref{eq:blah} only does (1) instead of Equation (1)
\crefname{enumi}{Step}{} % cref{eq:blah} only does (1) instead of Item(1)
\newtheorem{fact}{Fact}
\newtheorem*{rep@theorem}{\rep@title}
\newcommand{\newreptheorem}[2]{%
    \newenvironment{rep#1}[1]{%
        \def\rep@title{\cref{##1}}%
        \begin{rep@theorem}}%
            {\end{rep@theorem}}}
\keywords{Memory Reallocation}
\author{Martin Farach-Colton}
\affiliation{
 \institution{New York University}
 % \department{Department of Computer Science and Engineering}
 \city{New York}
 \state{NY}
  \country{USA}
}
\email{martin.farach-colton@nyu.edu}
\author{William Kuszmaul}
\affiliation{
\institution{Harvard University}
\city{Cambridge}
\state{MA}
\country{USA}
}
\email{william.kuszmaul@gmail.com}
\author{Nathan Sheffield}
\affiliation{
\institution{Massachusetts Institute of Technology}
\city{Cambridge}
\state{MA}
\country{USA}
}
\email{shefna@mit.edu}
\author{Alek Westover}
\affiliation{
\institution{Massachusetts Institute of Technology}
\city{Cambridge}
\state{MA}
\country{USA}
}
\email{alekw@mit.edu}
\title{A Nearly Quadratic Improvement for Memory Reallocation}
\begin{document}

\begin{abstract}
    In the Memory Reallocation Problem a set of items of various sizes must be dynamically assigned to non-overlapping contiguous chunks of memory.  It is guaranteed that the sum of the sizes of all items present at any time is at most a $(1-\eps)$-fraction of the total size of memory (i.e., the load-factor is at most $1-\epsilon$). The allocator receives insert and delete requests online, and can re-arrange existing items to handle the requests, but at a \emph{reallocation cost} defined to be the sum of the sizes of items moved divided by the size of the item being inserted/deleted. 

The folklore algorithm for Memory Reallocation achieves a cost of $O(\epsilon^{-1})$ per update.  In recent work at FOCS'23, Kuszmaul showed that, in the special case where each item is promised to be smaller than an $\epsilon^4$-fraction of memory, it is possible to achieve expected update cost $O(\log\epsilon^{-1})$. Kuszmaul conjectures, however, that for larger items the folklore algorithm is optimal. 

In this work we disprove Kuszmaul's conjecture, giving an allocator that achieves expected update cost $O(\epsilon^{-1/2} \operatorname*{polylog} \epsilon^{-1})$ on any input sequence. We also give the first non-trivial lower bound for the Memory Reallocation Problem: we demonstrate an input sequence on which any resizable allocator (even \emph{offline}) must incur amortized update cost at least $\Omega(\log\epsilon^{-1})$. 

Finally, we analyze the Memory Reallocation Problem on a stochastic sequence of inserts and deletes, with random sizes in $[\delta, 2 \delta]$ for some $\delta$. We show that, in this simplified setting, it is possible to achieve $O(\log\epsilon^{-1})$ expected update cost, even in the ``large item'' parameter regime ($\delta > \epsilon^4$).
\end{abstract}
\maketitle

\section{Introduction}

In the \defn{Memory Reallocation Problem} an \defn{allocator} must assign a dynamic set of items to non-overlapping contiguous chunks of memory. Given an set of items with sizes $x_1, x_2, \ldots, x_n$, and given a memory represented by the real interval $[0, 1]$,  a \defn{valid allocation} of these items to memory locations is a set of locations $y_1,\ldots, y_n\in [0,1]$ so that the intervals $(y_i, y_i+x_i)\subset [0,1]$ are all disjoint. As objects are inserted/deleted over time, the job of the allocator is rearrange items in memory so that, at any given moment, there is a valid allocation. The allocator is judged by two metrics: the maximum \defn{load factor} that it can support; and the \defn{reallocation overhead} that it induces. The allocator is said to support \defn{load factor} $1 - \eps$ if it can handle an arbitrary sequence of item insertions/deletions, where the only constraint is that the sum of the sizes of the items present, at any given moment, is never more than $1 - \eps$; and the allocator is said to achieve \defn{overhead} (or \defn{cost}) $c$ on a given insertion/deletion, if the sum of the sizes of the items that are rearranged is at most a $c$-factor larger than the size of the item that is inserted/deleted. We remark that all of the allocators in this work will be \defn{resizable}, meaning that if $L \le 1-\eps$ is the total size of items present at any time then, then all the items are placed in the interval $[0,L+\eps]$.

The Memory Reallocation Problem, and its variations, have been studied in a variety of different settings, ranging from history independent data structures \cite{naor2001anti, Ku23}, to storage allocation in databases \cite{bender2017cost}, to allocating time intervals to a dynamically changing set of parallel jobs \cite{bender2015cost, bender2013reallocation, lim2015dynamic}. The version considered here \cite{naor2001anti, Ku23, bender2015cost} is notable for its choice of cost function: if we model the \emph{time} needed to allocation/deallocate/move an object of size $s$ as $O(s)$, then an overhead of $O(c)$ implies that the total time spent moving objects around is at most an $O(c)$-factor larger than the time spent simply allocating/deallocating objects. The problem of minimizing movement overhead is especially important in systems with many parallel readers, since objects may need to be locked while they are being moved.

%%% WHAT DOES Otilde (1/log n) mean? and what is n?
\paragraph{Past Work} Most early work on memory allocation focused on the setting in which items \emph{cannot} be moved after being allocated (i.e., the 0-cost case) \cite{luby1996tight, robson1974bounds, robson1971estimate}. However, it is known that such allocators necessarily perform very poorly on their space usage -- they cannot, in general, achieve a load factor better than $\widetilde{O}(1 / \log n)$) \cite{luby1996tight, robson1974bounds, robson1971estimate}. The main goal in studying memory \emph{\textbf{re}}allocation \cite{Ku23, bender2017cost} is therefore to determine \emph{how much} item movement is necessary to achieve a load factor of $1 - \epsilon$.

The \defn{folklore algorithm} \cite{Ku23, bender2017cost} for the Memory Reallocation Problem is based on the
observation that whenever an item of size $k$ must be
inserted we can, by the pigeon-hole principle, find an
interval of size $O(k\eps^{-1})$ which has $k$ free space. Thus
it is possible to handle inserts at cost $O(\eps^{-1})$ and
handle deletes for free.

In recent work at FOCS'23 \cite{Ku23}, Kuszmaul shows
how to handle the case where all items have size smaller
than $\eps^4$ with expected update cost $\bigO(\log \eps^{-1})$.
However, Kuszmaul conjectures that, in general, the $O(\eps^{-1})$ folklore bound should be optimal. He proposes, in particular, that the special case in which objects have sizes in the range $(\epsilon, 2\epsilon)$ should require $\Omega(\eps^{-1})$ overhead per insertion/deletion. 

\paragraph{This Paper: Beating the Folklore Bound} In this work we disprove Kuszmaul's conjecture. In fact, we prove a stronger result: that it is possible to beat the folklore $O(\eps^{-1})$ bound without any constraints on object sizes.

We begin by considering the specialized setting in which items have sizes in the range $(\epsilon, 2 \epsilon)$---this, in particular, was the setting that Kuszmaul conjectured to be hard. We give in \cref{sec:poc} a relatively simple allocator that achieves $\bigO(\eps^{-2/3})$ amortized update cost in the case where all items have sizes in $(\eps, 2\eps)$. Although this allocator does not solve the full problem that we care about, it does introduce an important algorithmic idea that will be useful throughout the paper: the idea of having a special small set of items stored as a suffix of memory which are each ``responsible'' for a large number of items in the main portion of memory. Whenever an item from the main portion of memory is deleted, it gets ``replaced'' with an item that was responsible for it from the small suffix of memory. By using this notion of responsibility in the right way, we can imbue enough combinatorial structure into our allocation algorithm that it is able to beat the folklore $O(\epsilon^{-1})$ bound.

The construction of \cref{sec:poc} is a good start, but does not immediately generalize to handle arbitrary item sizes. In 
\cref{sec:geo} we give several new ideas to handle the case of items with sizes in $[\eps^5, 1]$. Then, we show how to combine this allocator with Kuszmaul's allocator from \cite{Ku23} to achieve:
\begin{reptheorem}{cor:onehalf}
There is a resizable allocator for arbitrary items with expected
update cost $\tilde{\bigO}(\eps^{-1/2}) = \bigO(\eps^{-1/2} \operatorname*{polylog} \epsilon^{-1}).$
\end{reptheorem}
At a high level, the algorithm in \cref{cor:onehalf} takes the basic idea from \cref{sec:poc} (a small suffix of items that take responsibility for items in the main array), and applies it in a nested structure. This nested ``responsibility'' structure is not simply a recursive application of the technique---rather, it is carefully constructed so that items of a given size can only appear some levels of the nest. This ends up being what enables us to beat the folklore bound with an arbitrary combination of item sizes.

%We remark that, although our algorithm for \cref{cor:onehalf} offers a randomized guarantee, the guarantee is nonetheless true even against an adaptive adversary. That is, even if the sequence of insertions/deletions is made by an adversary that can see the state of memory over time, the allocation will still achieve an expected amortized update cost of $\widetilde{\bigO}(\eps^{-1/2})$. This was not true of the $O(\log \epsilon^{-1})$ bound achieved by past work for the small-objects case \cite{Ku23}.

%\todo{Do we also get a non-amortized guarantee for oblivious sequences if we want it? I don't think so. if they want to make a particular operation cheap I think they can do it. ANSWER: no}

We conclude the paper with two additional results that are of independent interest. The first is a lower bound, showing that $O(1)$ update cost is not, in general, possible. And the second is an upper bound for a special case where the input sequence is generated by a simple stochastic process.

Until now, the only non-trivial lower bounds for the Memory Reallocation Problem have been for very restricted sets of allocation algorithms \cite{Ku23}. In \cref{sec:short-lowerbound}, we give a lower bound that applies to any (even offline) allocator. In fact, the update sequence which we use to establish the lower bound is remarkably simple, involving just two item sizes.
\begin{reptheorem}{thm:ztlower}
There exist sizes $s_1,s_2 \in \Theta(\eps^{1/2})$ and an update sequence $S$ consisting solely of items of sizes $s_1,s_2$ such that any resizable allocator (even one that knows $S$) must have amortized update cost at least $\Omega(\log\eps^{-1})$ on $S$.
\end{reptheorem}

Finally, in \cref{sec:random}, we consider a setting where item arrivals and departures follow a simple stochastic assumption. Define a \defn{$\delta$-random-item sequence} as one where memory is first filled with items of sizes chosen randomly from $[\delta, 2\delta]$, and then the allocator receives alternating deletes of random items and inserts of items with sizes chosen randomly from $[\delta, 2\delta]$. In this setting we are able to achieve $O(\log \eps^{-1})$ overhead:
\begin{reptheorem}{thm:summer}
For any $\delta = \poly(\eps)$, there is a resizable allocator that handles $\delta$-random-item sequences with worst-case expected update cost $\bigO(\log\eps^{-1})$.
\end{reptheorem}
We note that the algorithm for \cref{thm:summer} uses very different techniques from the other algorithms proposed in the paper. In fact, because of this, the algorithm in \cref{thm:summer} ends up being quite nontrivial to implement time-efficiently. We give an implementation that decides which items to move in worst-case expected time $\bigO(\eps^{-1/2})$ per update. The time bound is due to a technically interesting lemma about subset sums of random sets.

% In \cref{sec:small} we give an allocator for items of sizes in $\eps^{2+\gamma}, 2\eps^{2+\gamma}$ that achieves amortized update cost $\bigO(\eps^{-1})$. We show how to combine a variant of this strategy this with Kuszmaul's earlier construction to obtain an allocator for items with maximum size $\eps^2$ that achieves amortized update cost $\bigO(\log \eps^{-1})$. 
% The major remaining open question for the small items case, i.e., all items have size at-most $\eps^2$  is whether $\bigO(\log\eps^{-1})$ is tight, or if the $\bigO(1)$-cost allocators can be extended to apply to this entire regime.

\section{Preliminaries and Conventions}\label{sec:prelim}

We use $[n]$ to denote the set $\set{1,2,\ldots, n}$. 
For set $X$ and value $y$ we define $y+X = \setof{y+x}{x\in X}$ and $y\cdot X = \setof{yx}{x\in X}$.
We use $\log$ to denote $\log_2$.
We use $|I|$ to denote the size of an item $I$. The \defn{total size} of a set of items is defined to be the sum of their sizes. 
We will refer to memory as going from left to right, i.e., the start of memory is on the left and the end of memory is on the right.

In the \defn{Memory Reallocation Problem} with free-space parameter $\epsilon$, an \defn{allocator} maintains a set of items in memory, which is represented by the interval $[0,1]$.
Memory starts empty, and items are inserted and deleted over time by an oblivious adversary, where the only constraint on the update sequence is that the items present at any time must have total size at most $1-\eps$. The job of an allocator is to maintain a dynamic allocation of items to memory, that is, to assign each item to a disjoint interval whose size equals the item's size. If the allocator moves $L$ total size of items on an update of size $k$ we say the update is handled at \defn{cost} $L/k$.

We construct allocators that give an extra guarantee: If $L\in [0, 1-\eps]$ is
the total size of items present at any time, then a \defn{resizable allocator}
guarantees that all the items are placed in the interval $[0,L+\eps]\subseteq
[0,1]$.

Our analysis is asymptotic as a function of $\eps^{-1}$.
Thus, we may freely assume that $\eps^{-1}$ is at least a
sufficiently large constant.
We use the notation $\widetilde{O}$ to hide $\polylog(\eps^{-1})$ factors,
and the notation $\poly(n)$ to denote $n^{\Theta(1)}$.

\section{An Allocator for Large Items}\label{sec:poc} In this section we describe a simple allocator for a special
case of the Memory Reallocation Problem, disproving a conjecture of
Kuszmaul \cite{Ku23}.
    We remark that the folklore bound only gives performance
    $\bigO(\eps^{-1})$ in the regime of \cref{thm:poc}, i.e., gives no non-trivial bound.
\begin{theorem}\label{thm:poc}
There is a resizable allocator for items of with sizes in  $[\eps,2\eps)$ that
achieves amortized update cost $\bigO(\eps^{-2/3})$.
\end{theorem}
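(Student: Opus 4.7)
The plan is to implement the ``suffix responsibility'' blueprint sketched in the introduction. Set $t := \eps^{-1/3}$. Partition memory into a \emph{main} region of $n-t$ items arranged as $t$ consecutive \emph{blocks} $B_1, \ldots, B_t$ with $\Theta(\eps^{-2/3})$ items each, followed by a \emph{suffix} of $t$ items $S_1, \ldots, S_t$ and a small \emph{orphan} region. At each rebuild the suffix is set to consist of the globally smallest $t$ items (sorted by size), so the invariant $|S_i| \le \min_{J \in B_i} |J|$ holds---meaning $S_i$ can physically be transplanted into any slot of its paired block $B_i$.

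Run the algorithm in \emph{epochs} of $t$ updates. Each epoch begins with a global rebuild that re-sorts and re-packs all $n = \Theta(\eps^{-1})$ items; the rebuild moves $\Theta(1)$ total size at cost $O(\eps^{-1})$, amortizing to $O(\eps^{-1}/t) = O(\eps^{-2/3})$ per update. Within an epoch I would handle updates locally. An insertion appends to the orphan region at cost $O(1)$. A deletion of an orphan or a suffix item is closed by an $O(t)$-item leftward shift of the tail, contributing $O(\eps^{-1/3})$. The distinguished case is a deletion of a main item $I \in B_i$, handled by the \emph{responsibility move}: transplant $S_i$ into $I$'s slot, then slide the $O(\eps^{-2/3})$ items of $B_i$ lying to the right of $I$ leftward to absorb the residual in-block gap of size $|I|-|S_i|<\eps$, and correspondingly shrink $B_i$'s right boundary. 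The per-update cost here is $O(\eps^{-2/3})$, matching the budget.

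The main technical obstacle is the \emph{resizability} invariant $[0, L+\eps]$: a single responsibility move shrinks $B_i$ by $|I|-|S_i|<\eps$, which would naively leave a gap between $B_i$ and $B_{i+1}$ that exceeds the $\eps$ cushion and seemingly requires a cascading leftward shift of $B_{i+1}, \ldots, B_t$---an $\Omega(\eps^{-1})$ cost per update if done eagerly. The key idea for overcoming this is to pair each responsibility move with a small $O(t)$-item leftward shift of the suffix and the orphan tail, and to schedule the block-boundary re-alignments as a rolling cascade spread out across the $t$ updates of an epoch, so that the total un-closed mid-memory gap at any time stays below $\eps$. The crux of the argument will be a charging lemma showing that, over the course of an epoch, each item is physically moved $O(1)$ times: this bounds the total per-epoch movement by $O(n) = O(\eps^{-1})$, matches the rebuild budget, and yields the claimed $O(\eps^{-2/3})$ amortized per-update cost. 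The hardest step in executing this plan will be the careful coordination of the intra-block shift, the suffix/orphan shift, and the rolling cross-block cascade, so that the resizability invariant is verifiably maintained after \emph{every} intermediate update rather than only at the boundaries of an epoch.
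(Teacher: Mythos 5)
There is a genuine gap, and it sits exactly at the step you flag as the ``main technical obstacle.'' Your responsibility move replaces a deleted item $I \in B_i$ by the paired suffix item $S_i$, whose only relation to $I$ is $|S_i| \le \min_{J \in B_i}|J|$. The residual gap $|I| - |S_i|$ can therefore be as large as nearly $\eps$ (e.g.\ $|I| = 1.9\eps$, $|S_i| = \eps$). The free space in memory is only $\eps$ in total, so after just two such deletions (with the deleted mass re-inserted at the tail, as the adversary is free to do) the accumulated internal gaps already exceed the entire slack --- this breaks not only resizability but basic validity inside $[0,1]$. The proposed rescue, a ``rolling cascade'' of block re-alignments, cannot be made to work within budget: to keep the total outstanding gap below $\eps$ you must close gaps essentially as fast as they are created, i.e.\ about one full closure every $O(1)$ updates, and closing a gap sitting in $B_1$ means shifting a constant fraction of memory, which costs $\Theta(\eps^{-1})$ per update. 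Your charging claim that each item moves $O(1)$ times per epoch is incompatible with maintaining the invariant after \emph{every} update, which is why you correctly sensed this was the hardest step --- it is in fact impossible as set up. A secondary issue: your suffix holds only one replacement per block, so a second deletion landing in the same block within an epoch has no replacement available.

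The missing idea is to pair replacements by \emph{size} rather than by \emph{position}. The paper partitions $[\eps,2\eps)$ into $\lceil\eps^{-1/3}\rceil$ size classes of width $\eps^{4/3}$ and keeps, as the suffix (``covering set''), the $\lfloor\eps^{-1/3}\rfloor$ smallest items of \emph{each} class. A deleted item is replaced by a covering-set item of the \emph{same size class}, so the residual gap is at most $\eps^{4/3}$; it is simply left in place (the replacement is ``logically inflated''), and over the $\lfloor\eps^{-1/3}\rfloor$ updates between rebuilds the total waste is at most $\eps^{-1/3}\cdot\eps^{4/3} = \eps$. No cascade is ever needed; only the $O(\eps^{1/3})$-size covering set is compacted per update, giving cost $O(\eps^{-2/3})$, and the per-class supply of $\eps^{-1/3}$ replacements survives a whole epoch. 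Your epoch length, rebuild amortization, and overall cost arithmetic are all correct and match the paper; the size-class/logical-inflation mechanism is the one piece you would need to replace your cascade with.
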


Theorem \ref{thm:poc} offers an \emph{amortized bound}, although, as we shall
see in \cref{sec:geo}, it is also possible to obtain a non-amortized expected
bound. We remark that there are two notions of amortized cost that one could
reasonably consider -- if $L_i$ denotes the total-size of items moved to handle
the $i$-th update and $k_i$ is the size of the $i$-th update, then either of
$\frac{1}{n} \sum_{i = 1}^n L_i / k_i$ or $\sum_{i = 1}^n L_i / \sum_{i = 1}^n
k_i$ would be a reasonable objective function. Fortunately, in this section,
because the $k_i$'s are all equal up to a factor of two, the two objective
functions are the same up to constant factors. In later sections where object
sizes differ by larger factors, we will go with the convention that guarantees
should be worst-case expected rather than amortized.

\begin{proof}
    We call our allocator \simple. We partition the sizes $[\eps,2\eps)$ into
    $\lceil\eps^{-1/3}\rceil$ \defn{size classes}, where the $i$-th size class
    consists of items with size in the range $$[\eps+(i-1)\eps^{4/3},
    \eps+i\eps^{4/3}).$$ Now we describe the operation of \simple; we also
    provide pseudocode for \simple in \cref{alg:poc}.
    
\begin{algorithm}
    \centering
    \caption{\simple Allocator}\label{alg:poc}
    \begin{algorithmic}[1]
        \State \simple maintains a suffix of the items called the \defn{covering set}.
        \If{it has been $\lfloor \eps^{-1/3} \rfloor$ updates since the last rebuild (or it is the first update)}
        \State Perform a rebuild as follows:
        \State Logically restore items to their original size (i.e.,  revert any logical inflation of sizes).
        \State For each $i\in [\lceil\eps^{-1/3}\rceil]$ let $x_i$ be the number of items of the $i$-th
        size class.
        \State Let $S$ be the union over $i\in[\lceil\eps^{-1/3}\rceil]$ of
        the smallest $\min(x_i, \lfloor\eps^{-1/3}\rfloor)$ items in the $i$-th
        size class.
        \State Arrange items to be contiguous and left-aligned, with items $S$ occurring after the other items.
        \State Update the covering set to be $S$.
        \EndIf
        \If{an item $I$ is inserted}
        \State Place $I$ immediately after the final
        item of the covering set and add $I$ to the covering set.
        \ElsIf{an item $I$ is deleted}
        \If{$I$ is not part of the covering set}
        \State Let $I'$ be an item from the covering set of
        the same size class as $I$ with $|I'|\le |I|$. 
        \State Place $I'$ where $I$ used to start.
        \State Logically inflate the size of $I'$ to $|I|$.
        \EndIf
        \State{Remove $I$ from memory.}
        \State{Compact the covering set, arranging its items to be contiguous and flush with the non-covering set.}
        \EndIf
    \end{algorithmic}
\end{algorithm}

    \paragraph{Rebuilds}
    Every $\lfloor \eps^{-1/3}\rfloor$ updates (starting from the first update)
    \simple performs a \defn{rebuild}. 
    Let $x_i$ be the number of items of size class $i$ at the
    time of this rebuild. In a rebuild operation \simple takes the $\min(x_i,
    \lfloor \eps^{-1/3}\rfloor)$ smallest items from size class $i$ for each
    $i\in [\lceil\eps^{-1/3}\rceil]$ and groups them into a \defn{covering set}.
    \simple arranges memory so that the items are contiguous, left-aligned
    (i.e., starting at $0$), and so that the covering set is a suffix of the
    present items.
    
    \paragraph{Handling inserts}
    When an item is \textbf{inserted} \simple adds the item to the covering set
    and places it directly after the final element currently in memory. 
    
    \paragraph{Handling deletes}
    Suppose an item $I$ of size class $i$ is \textbf{deleted}. If $I$ is not
    part of the covering set \simple finds an item $I'$ in the covering set
    which is also of size class $i$ but with $|I'|\le |I|$. \simple places $I'$
    at the location where $I$ used to start 
    and \defn{logically inflates} item $I'$ to be of size $|I|$. That is,
    \simple will consider item $I'$ to be of size $|I|$ until $I'$ is inflated
    even further or until the next rebuild. On each rebuild all items are
    reverted to their actual size. We say this \defn{swap} operation introduces
    \defn{waste} $|I|-|I'|\le \eps^{4/3}$ into memory. Finally,
    regardless of whether $I$ was in the covering set, \simple ends the delete
    by removing $I$ from memory and \defn{compacting} the covering set, i.e.,
    arranging the covering set items to be contiguous, and left-aligned against
    the end of the non-covering-set.

\begin{figure}
    \centering
    \includegraphics[width=.9\linewidth]{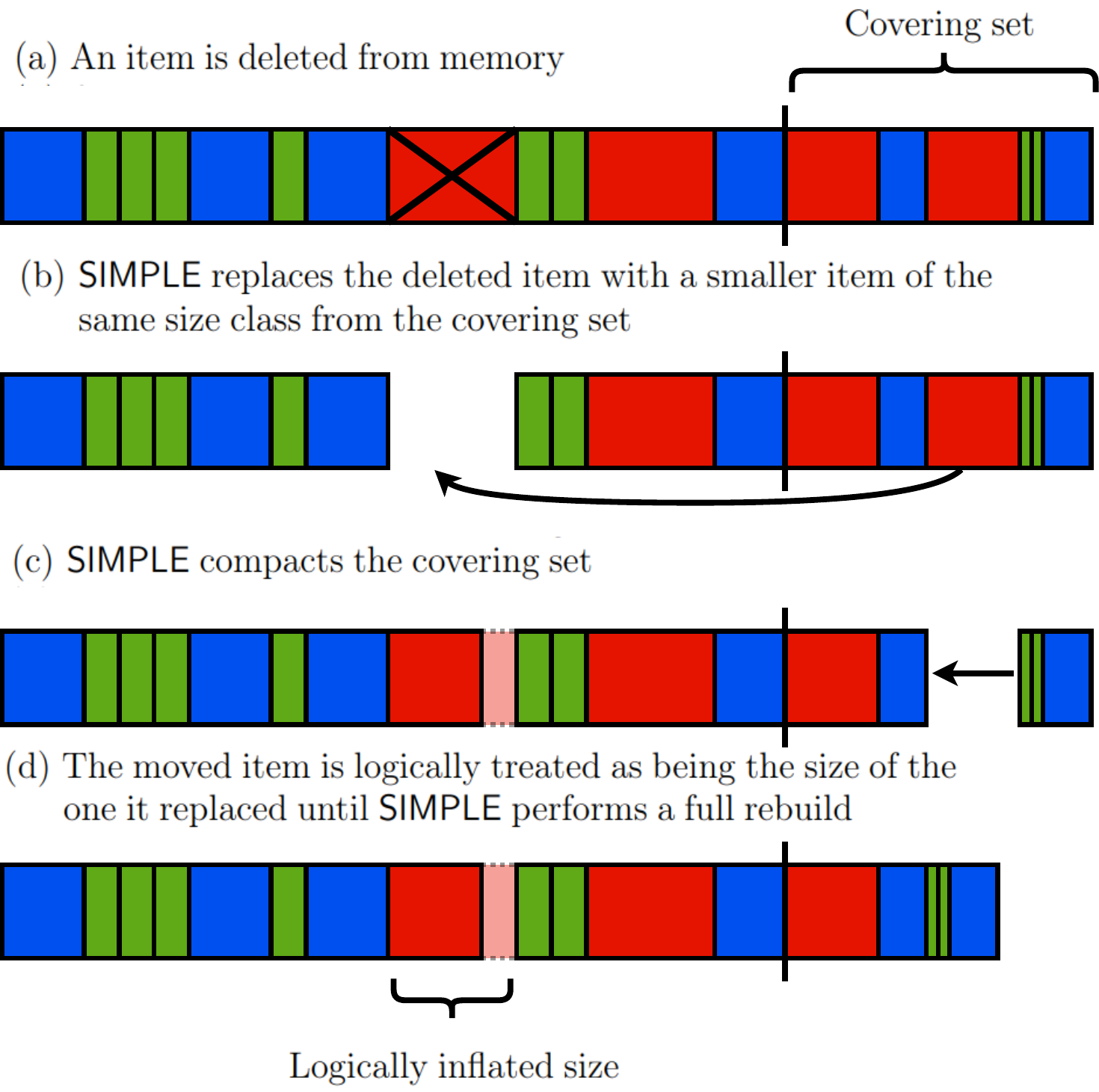}
    \caption{A depiction of \simple handling a delete of an item $I$ outside of the covering set by replacing $I$ with an item $I'$ from the covering set, inflating $I'$ to size $|I|$, and compacting the covering set.}
    \label{fig:poc-pic}
\end{figure}

\begin{lemma}\label{lem:simplecorrect}
\simple is correct and well-defined. 
\end{lemma}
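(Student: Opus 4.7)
The plan is to verify three properties throughout \simple's execution: well-definedness of the swap step (a valid $I'$ exists whenever one is needed), allocation validity (items lie in pairwise disjoint intervals of memory), and resizability (items fit in $[0, L+\eps]$, where $L$ is the current total item size). Since each rebuild reverts all logical inflations and lays items out contiguously from scratch, I will reason about a single inter-rebuild epoch, which contains at most $\lfloor \eps^{-1/3} \rfloor$ updates.

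For validity and resizability, I will bound the accumulated waste from logical inflations. Each swap inflates $I'$ by $|I| - |I'| < \eps^{4/3}$ because $I$ and $I'$ lie in the same size class, which has diameter $\eps^{4/3}$. With at most $\lfloor \eps^{-1/3} \rfloor$ swaps per epoch, the total waste stays below $\eps$. All steps of \simple maintain a contiguous prefix layout in which items occupy intervals of their current logical sizes: rebuild, insert, and compaction are immediate, and a swap places $I'$ in exactly the size-$|I|$ interval that $I$ vacated. Hence the used span equals $L + \mathrm{waste} \le (1-\eps) + \eps = 1$, yielding both non-overlap and the resizability bound.

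The main obstacle is well-definedness of the swap, which requires a covering class-$i$ item $I'$ of size at most $|I|$ whenever a non-covering class-$i$ item $I$ is deleted. For existence, I will split on $x_i$ at the last rebuild. If $x_i \le \lfloor \eps^{-1/3} \rfloor$, every class-$i$ item enters the covering set at rebuild, inserts only add to the covering set, and a swap-out of a class-$i$ covering item happens only in response to a non-covering class-$i$ delete; hence induction on the update index shows that no non-covering class-$i$ item is ever created, and the swap case cannot arise. If $x_i > \lfloor \eps^{-1/3} \rfloor$, the covering set starts with $\lfloor \eps^{-1/3} \rfloor$ class-$i$ items, each update removes at most one of these original items, and any update within the epoch is preceded by strictly fewer than $\lfloor \eps^{-1/3} \rfloor$ updates, so at least one original class-$i$ covering item survives.

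To handle the size condition $|I'| \le |I|$, I will prove by induction on updates the invariant that every non-covering class-$i$ item has current logical size at least the maximum original size of a class-$i$ covering item. The base case follows directly from the smallest-items selection rule used during the rebuild. For the inductive step, the only way to create a new non-covering class-$i$ item is a swap, which inflates some $I'$ to exactly the current logical size of a prior non-covering class-$i$ item---a value that already satisfies the invariant by the inductive hypothesis. Consequently, choosing $I'$ to be any surviving original class-$i$ covering item guarantees $|I'| \le |I|$, completing the proof.
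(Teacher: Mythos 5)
Your proposal is correct and follows essentially the same route as the paper's proof: the same waste bound of $\lfloor \eps^{-1/3}\rfloor \cdot \eps^{4/3} \le \eps$ for validity/resizability, and the same case split on whether size class $i$ had more or fewer than $\lfloor \eps^{-1/3}\rfloor$ items at the last rebuild for well-definedness of the swap. Your explicit induction for the size invariant ($|I'|\le|I|$, including the observation that swapped-out items inherit an inflated size already satisfying the invariant) is just a more detailed spelling-out of the invariant the paper states in one sentence.
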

\begin{proof}
    To verify correctness we must show that \simple places items within the
    allowed space. \simple essentially stores the items contiguously, except for
    the waste introduced on deletes. Each delete creates waste at most
    $\eps^{4/3}$: the maximum possible size difference between two items of the
    same size class. \simple performs a rebuild every $\lfloor \eps^{-1/3}
    \rfloor$ updates. Thus, the total waste in memory will never exceed
    $$\lfloor\eps^{-1/3}\rfloor \cdot \eps^{4/3} \le \eps.$$
    Thus, if the total size of items present is $L$, \simple stores all items in
    the memory region $[0, L+\eps]$.

    To verify that \simple is well-defined we must argue that on every delete of
    an item outside of the covering set \simple can find a suitable item in the
    covering set to swap with the deleted item; all other parts of \simple's
    instructions clearly succeed. Fix a size class $i$. We consider two
    (exhaustive) cases for how many items of size class $i$ were placed in the
    covering set on the previous rebuild, and argue that in either case whenever
    an item $I$ of size-class $i$ outside the covering set is deleted \simple
    can find an appropriate item $I'$ in the covering set to swap with $I$.

    \textbf{Case 1}: The $\lfloor \eps^{-1/3}\rfloor$ smallest items of size
    class $i$ were placed in the covering set on the previous rebuild; call this
    set of items $S_i$. Then, because \simple performs rebuilds every $\lfloor
    \eps^{-1/3}\rfloor$ updates and because \simple swaps at most one of the
    items from $S_i$ out of the covering set on each delete we have that on any
    delete before the next rebuild there is always an element of $S_i$ contained
    in the covering set. The items in $S_i$ were chosen to be the smallest items
    of size class $i$ at the time of the previous rebuild. Recall that inserted
    items are added to the covering set. Thus, we maintain the invariant that
    all items $I$ of size class $i$ outside of the covering set have (logical)
    size at least the size of any element in $S_i$. Thus, there is always an
    appropriate covering set item to swap with any deleted item of size class
    $i$ outside of the covering set.

    \textbf{Case 2}: If we are not in Case 1, then during the previous rebuild
    there were fewer than $\lfloor \eps^{-1/3} \rfloor$ total items of size
    class $i$, and \simple placed \emph{all of these items} in the covering set.
    This property, that all items of size class $i$ are contained in the
    covering set, is maintained until the next rebuild because inserted items
    are added to the covering  set. Thus, until the next rebuild there is
    \emph{never} a delete of an item of size class $i$ outside of the covering
    set: no such items exist. So the condition we desire to hold on such deletes
    is vacuously true.
    
\end{proof}

% \todo{font of figures is different than paper}
\begin{lemma}
\simple has amortized update cost $\bigO(\eps^{-2/3})$.
\end{lemma}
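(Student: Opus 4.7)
The plan is to decompose the amortized cost per update into three sources (rebuild cost, insert cost, and delete cost) and bound each one separately. The key structural fact I will need is a bound of $\bigO(\eps^{-2/3})$ on the number of items in the covering set at any time.

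First I would bound the size of the covering set. Immediately after a rebuild, the covering set contains at most $\lfloor \eps^{-1/3}\rfloor$ items from each of the $\lceil \eps^{-1/3}\rceil$ size classes, giving at most $\bigO(\eps^{-2/3})$ items. Between rebuilds, the only operations that can grow the covering set are inserts, and at most $\lfloor \eps^{-1/3}\rfloor$ of them occur before the next rebuild. Hence the covering set always contains $\bigO(\eps^{-2/3})$ items, and its total size is $\bigO(\eps^{-2/3} \cdot \eps) = \bigO(\eps^{1/3})$.

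Next I would bound the three cost sources. For \textbf{rebuilds}, the total size of items moved in a single rebuild is at most $1$ (the whole memory), and since rebuilds occur only every $\lfloor \eps^{-1/3}\rfloor$ updates, this contributes amortized size $\bigO(\eps^{1/3})$ per update. Since each update is of size $\Theta(\eps)$, the amortized reallocation cost charged by rebuilds is $\bigO(\eps^{1/3})/\Theta(\eps) = \bigO(\eps^{-2/3})$. For \textbf{inserts}, the item is placed directly after the covering set without moving anything else, so the cost is $\bigO(1)$. For \textbf{deletes}, two movements happen: moving the swapped-in covering set item $I'$ into $I$'s former location costs $|I'|/|I| = \bigO(1)$, and compacting the covering set moves at most the entire covering set, which has total size $\bigO(\eps^{1/3})$ by the bound above; dividing by $|I| = \Theta(\eps)$ gives a per-delete cost of $\bigO(\eps^{-2/3})$.

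Summing the three contributions gives amortized update cost $\bigO(\eps^{-2/3})$, as claimed. The main ``obstacle'' here is really just recognizing the bound on the covering set size, since everything else follows from direct accounting; the choice of $\lfloor \eps^{-1/3}\rfloor$ as both the per-class covering set quota and the rebuild period is precisely what balances the amortized rebuild cost against the per-delete compaction cost.
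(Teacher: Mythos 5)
Your proof is correct and follows essentially the same approach as the paper: bound the covering set's total size by $\bigO(\eps^{1/3})$, charge compaction at $\bigO(\eps^{1/3}/\eps)$ per delete, and amortize the $\bigO(\eps^{-1})$ rebuild cost over the $\lfloor\eps^{-1/3}\rfloor$ updates between rebuilds. Your observation that the covering set grows only via inserts (so gains at most $\lfloor\eps^{-1/3}\rfloor$ items total between rebuilds) is a slightly cleaner accounting than the paper's per-size-class bound, but yields the same conclusion.
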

\begin{proof}
The covering set has size at most $2\eps \cdot \lceil \eps^{-1/3} \rceil \cdot
2\lfloor\eps^{-1/3}\rfloor \le \bigO(\eps^{1/3})$. This is because all items
have size at most $2\eps$, the number of size classes is $\lceil  \eps^{-1/3}
\rceil$, and the number of items of each size class in the covering set starts
as at most $\lfloor \eps^{-1/3} \rfloor$ and then increases by at most one per
update during the $\lfloor \eps^{-1/3} \rfloor$ updates between rebuilds, and
hence the number of items of each size class in the covering set never exceeds
$2\lfloor \eps^{-1/3} \rfloor$. We compact the covering set on each update and
so incur cost $\bigO(\eps^{1/3}/\eps)\le \bigO(\eps^{-2/3})$ per update.
Rebuilds incur cost at most $1/\eps$, and occur every $\lfloor
\eps^{-1/3}\rfloor$ steps. Thus, their amortized cost is at most
$\eps^{-1}/\lfloor \eps^{-1/3}\rfloor \le \bigO(\eps^{-2/3})$. Overall,
\simple's amortized update cost is $\bigO(\eps^{-2/3})$.
\end{proof}

\end{proof}

\section{An Allocator for Arbitrary Items} \label{sec:geo}
\cref{thm:poc} gives a surprising and simple demonstration that the folklore bound is not tight in the large items regime. In this section we will show how to outperform the folklore algorithm for arbitrary items, which is substantially more difficult than \cref{thm:poc}. In \cite{Ku23} Kuszmaul has already shown how to outperform the folklore algorithm in the regime where items are very small. 
In \cref{subsec:combine} we show that Kuszmaul's allocator can be combined with any resizable allocator fairly easily, to even get a resizable allocator.
Thus, the main difficulty we address in this section is extending
\cref{thm:poc}'s allocator \simple 
to work on items with sizes in the interval $[\eps^5, 1]$.
There are two major obstacles not present in \simple that arise when
handling items with sizes that can differ by factor of $\poly(\eps)$.

The first challenge is that \simple
compacts the entire covering set on every delete. 
The covering set needs to be large enough to contain a substantial quantity of
items of each size class. Large items, e.g., of size close to $\eps^{1/2}$ can
potentially afford to compact the covering set each time they are the subject of
an update. However, it would be catastrophic if updates of smaller items, e.g.,
items of size $\eps^3$ caused the entire covering set to be compacted each time.
In fact, the situation is even more troublesome: we hope to improve \simple's update cost of $\bigO(\eps^{-2/3})$ to $\widetilde{\bigO}(\eps^{-1/2})$.
Thus, even items of size $\Theta(\eps)$ cannot afford to compact the entire covering set on each update if the covering set is large. 
And, in order to make rebuilds infrequent it seems like we must make the covering set quite large. 
% We address this issue by recursively nesting collections like the ``covering sets'' of \cref{thm:poc}, while putting different amounts of each item in the nested covering sets based on their size. 
 
The second challenge is that \simple breaks items into size classes, which are
groups of items whose sizes differ by at most $\eps^{4/3}$. The small
multiplicative range of item sizes that we assume in \cref{thm:poc} ensures that
the number of size classes will be small. 
However, we cannot use the same style of size classes once the item sizes can
vary by a factor of $\eps^5$: there would be far too many size classes.
In order to support a larger range of item sizes, we modify our size classes to
be \defn{geometric}. That is, we define size classes of the form
$[\delta(1+\alpha)^{i-1}, \delta(1+\alpha)^{i}]$ instead of $[\delta + \alpha (i-1),
\delta + \alpha i]$ for some $\alpha = \poly(\eps)$. 
However, geometric size classes cause a major complication absent in the
fixed-stride size class approach of \simple. Namely, with geometric size classes
large items waste more space than small items per delete. Thus, a naive approach
of rebuilding whenever the wasted space exceeds $\eps$ would be susceptible to
the following vulnerability: a few deletes of large items could waste a lot of
space, but then the rebuild could be triggered by a small item. But the rebuild
is very expensive when triggered by a small item.

We now introduce a construction to address these issues.

\subsection{Handling Items with Sizes in $[\eps^5,1]$}
\begin{theorem}\label{thm:geo}
  There is a randomized resizable allocator for items of size at least
  $\eps^5$ that achieves worst-case expected update cost $\widetilde{\bigO}(\eps^{-1/2}).$
\end{theorem}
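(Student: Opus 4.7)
The plan is to generalize \simple to the size range $[\eps^5,1]$ by maintaining a \emph{family} of covering sets, one per geometric size class, and organizing memory so that an update at a small size class cannot trigger expensive compactions on the covering sets of larger classes. First, I would partition sizes into $L = \bigO(\log\eps^{-1})$ geometric size classes of the form $[s_c, s_c(1+\alpha))$ for some $\alpha = \poly(\eps)$ to be tuned later. The logical-inflation trick from \simple still applies: each delete in class $c$ introduces waste at most $\alpha\cdot s_c$.

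I would lay out memory from left to right as the main bodies of each class in order of increasing $c$, and then the covering sets of each class in the same increasing order. On a delete from the main body of class $c$, the allocator swaps in a same-class item $I'$ from the covering set of class $c$, logically inflates $I'$ to $|I|$, and then compacts only the covering sets of classes $\ge c$ (those physically to the right of the freed slot). This is the ``nested'' aspect alluded to in the introduction: a class-$c$ update never moves items stored at positions belonging to a class $c'<c$, so small items remain cheap even though the total covering mass is large. Inserts are appended to the appropriate class's covering set analogously. Each class is rebuilt on its own schedule, roughly every $R_c \approx \eps / (L\alpha s_c)$ class-$c$ updates, which is the longest period compatible with each class using only an $\eps/L$ share of the $\eps$ waste budget.

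The main obstacle will be simultaneously balancing three quantities across the $L$ classes: per-delete waste (scales like $\alpha s_c$); per-update compaction cost of the right-tail covering sets (depends on the cumulative covering mass to the right of the deleted item, which we want to divide by $s_c$ and still stay small); and the amortized cost of a rebuild at class $c$. Since a rebuild of class $c$ touches only class-$c$ items (the other classes slide to accommodate the new layout), its cost is controlled by the covering set size of class $c$, and the right regime appears to be equalizing covering-set mass across classes. Working this out should point to $\alpha = \tilde\Theta(\eps^{1/2})$, giving both the compaction overhead and the amortized rebuild overhead $\tilde\bigO(\eps^{-1/2})$ per update.

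Finally, to upgrade the resulting amortized guarantee into a \emph{worst-case expected} bound, I would replace each class's deterministic rebuild counter with an independent coin flip that triggers a class-$c$ rebuild with probability $1/R_c$ on each class-$c$ update. This spreads the rebuild work uniformly over expectations. The hardest step to verify is the analogue of \cref{lem:simplecorrect}: that under randomized per-class rebuilds and the nested compaction rule, every deletion of a main-body class-$c$ item still finds an appropriately sized class-$c$ item waiting in its covering set, and that the per-class waste budgets $\eps/L$ never cumulatively exceed $\eps$ despite the coin flips. Handling the case where a class temporarily contains too few items (analogous to Case~2 of \cref{lem:simplecorrect}) under the nested layout, without cascading motion into smaller classes, is where I expect most of the technical work to live.
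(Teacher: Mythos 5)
There is a genuine gap: your flat, one-covering-set-per-class design cannot satisfy its own constraints, and the missing ingredient is precisely the nested level hierarchy that the paper builds. Consider the smallest class, with item size $s\approx\eps^5$. Its covering set must be compacted after every delete of one of its items, at cost (covering mass)$/s$; to keep this at $\widetilde{\bigO}(\eps^{-1/2})$ the covering set may hold only $\widetilde{\bigO}(\eps^{-1/2})$ items (mass $\widetilde{\bigO}(\eps^{4.5})$), so it is depleted after $\widetilde{\bigO}(\eps^{-1/2})$ deletes and must then be replenished. But replenishing means extracting the new smallest class items from the main body, whose mass can be $\Theta(1)$, so the rebuild costs up to $\eps^{-5}$ and amortizes to $\widetilde{\bigO}(\eps^{-4.5})$ per update --- not, as you assert, a cost ``controlled by the covering set size of class $c$.'' Making the covering set larger to delay rebuilds only transfers the blow-up to the per-delete compaction. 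The paper escapes this bind by replacing each class's single covering set with $\ell=\Theta(\log\eps^{-1})$ \emph{nested} suffixes with geometrically decreasing mass limits $m_j$: a class-$i$ delete is repaired from, and compacts only, the deepest level $j_i^*$ that fits a class-$i$ item, which holds $O(1)$ such items and has total mass $O(C m_{j_i^*})=O(C\,|I|)$; that level is replenished by rearranging only level $j_i^*-1$ (twice the mass), which is replenished from level $j_i^*-2$, and so on, with randomized per-class thresholds $r_{i,j}=\Theta(c_{i,j})=\Theta(m_j/b_i)$ chosen so that the cost $\Theta(Cm_{j-1}/b_i)$ of a level-$j$ rebuild, times its trigger probability $O(1/c_{i,j})$, telescopes to $O(C)=\widetilde{\bigO}(\eps^{-1/2})$ at every level. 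No single-level choice of covering-set sizes achieves this cancellation.

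Two secondary issues. First, your parameter bookkeeping is inconsistent: geometric classes of ratio $1+\alpha$ over $[\eps^5,1]$ number $\Theta(\alpha^{-1}\log\eps^{-1})$, so with $\alpha=\widetilde\Theta(\eps^{1/2})$ you get $\widetilde{\Theta}(\eps^{-1/2})$ classes, not $\bigO(\log\eps^{-1})$; the paper also peels off ``huge'' items (size $\ge\eps^{1/2}/100$) and handles them by full recompaction, which your scheme needs as well. Second, triggering a class-$c$ rebuild by an independent coin of bias $1/R_c$ per update does not give correctness: with constant probability a class goes $2R_c$ updates without a rebuild and its covering set, sized for $R_c$ deletes, is exhausted. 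The paper instead draws a random threshold from $[\lceil c_{i,j}/4\rceil,\lceil c_{i,j}/3\rceil]$, which both deterministically guarantees a rebuild before depletion and (via the ``overflow'' renewal argument of \cref{lem:hittingcats}) bounds the per-update trigger probability, yielding the worst-case expected bound. Separately from level rebuilds, the waste from logical inflation is recovered by a global compaction gated by its own random threshold in $(\eps/2,\eps)$, so that the recovery is provably unlikely to be triggered by (and charged to) a small item.
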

\begin{proof}
  We call our allocator \geo. 
  \geo labels an item as \defn{huge} if it has size at least $\eps^{1/2}/100$. 
  Whenever a huge item $I$ is inserted or deleted
  \geo rearranges all of memory so that all huge 
  items are compacted together at the start of memory. 
  The cost of each such operation is $\bigO(\eps^{-1/2})$. 
  Thus, we may assume without loss of generality that there are no huge items.
  Assume that $\eps^{-1}$ is a power of $4$. 
  This is without loss of generality up to decreasing $\eps$ by at most a factor of $4$.
  
  Let $\beta = 1+\eps^{1/2}$. \geo classifies the non-huge items into $C\le
  \bigO(\eps^{-1/2}\log \eps^{-1})$ size classes. Specifically, an item is
  classified as part of the $i$-th size class if it has size in the interval
  $[\eps^{5}\beta^{i-1}, \eps^{5}\beta^{i}).$ \geo builds a sequence of $\ell =
  4.5\log \eps^{-1}$ \defn{covering levels} \footnote{Note that $\ell\in \N$ by
  our assumption that $\eps^{-1}$ is a power of $4$.} -- nested suffixes of
  memory with geometrically decreasing sizes. In particular, if an item $I$ is
  in level $j$, we also consider $I$ to be in each level $j'<j$. For each $j\in
  [\ell]$ the \defn{mass limit} for each size class in level $j$ is defined to
  be $$m_j =  2^{\ell-j+1}\eps^5.$$ We will ensure the \defn{level size
  invariant}: for all $j\in [\ell], i\in [C]$ the total size of items of size class
  $i$ in level $j$ is at most $2m_j$. In particular, this will mean that the
  total size of level $j$ is at most $2C m_j$. 
  Note that $m_\ell = 2\eps^5$: the deepest level can fit only $O(1)$ of even
  the smallest items.  Also note that
  $$m_1 = 2^{\ell}\eps^{5} = 2^{4.5\log\eps^{-1}}\eps^{5} = \eps^{1/2},$$ 
  so level $1$ can fit at least $\Omega(1)$ of even the largest items. 
  For convenience we will also define the $0$-th level to
  mean all of memory with $m_0=1$. 
  
  For each $i\in [C]$, let $s_i$ denote the total number of items of size class
  $i$; this number will change as items are inserted and deleted. Let $b_i =
  \eps^{5}\beta^{i}$: all items of size class $i$ have size smaller than
  $b_i$. For each $i\in [C],j\in [\ell]$ 
  the number of items of size class $i$ in level $j$ will always be at most twice the quantity
  $$c_{i,j}=\floor{m_j/b_i}.$$ For
  convenience we also define $c_{i,0} = \infty$ for each $i\in [C]$. 

  % this is unnecessary: assume memory starts empty
  % \geo starts by placing the smallest $\min(s_i,c_{i,j})$ items of size class $i$ into level
  % $j$ for each $i\in [C]$ and each $j\in [\ell]\sqcup\set{0}$; recall that if an
  % item $I$ is in level $j$ we also consider the item to be in each level $j'<j$
  % so this is well-defined.

We now describe \geo.
% ; we also provide pseudocode for \geo in \cref{sec:appendixfigures}.
 
\paragraph{Level rebuilds}
For each $i\in [C]$, define $j_i^*$ to be the largest level $j\in [\ell]$ such
that $c_{i, j} \geq 1$; $j_i^*$ is the deepest level that could feasibly contain
an item of size class $i$. In fact we will have $c_{i,j_i^*} = 1$, because the
mass limit for any levels $j,j+1$ differ by a factor of $2$, and because the
mass limit in level $\ell$ is such that level $\ell$ fits at most $1$ of any
size class. For each $i\in [C], j \in [j_i^*]$ \geo keeps \defn{insert/delete
level rebuild thresholds} $r_{i,j}, r_{i,j}' \in [\dceil{c_{i,j}/4},
\dceil{c_{i,j}/3}]\cap \N$. \geo initializes the level rebuild thresholds
uniformly randomly from this range. 

Updates will sometimes cause \defn{level rebuilds}. 
To simplify the description of our allocator it is also useful to have a concept
of a \defn{free rebuild} (a type of rebuild). A free rebuild is a ``sentinel value'': it is only a
logical operation and has zero cost. At the very start \geo performs a free
rebuild of each level $j$ by each size class $i$. We describe a level rebuild
caused by an insert; level rebuilds caused by deletes are completely symmetric.
 Suppose an item of size class
$i_0$ is inserted. For each $j\in [\ell]$, let $t_j$ denote the number of
inserts since the previous time that level $j$ has been rebuilt (including free
rebuilds) by a size class $i_0$ item. Let $j_0$ be the smallest $j\in
[j_{i_0}^*]$ such that $t_j \ge r_{i_0,j}$ (in fact, we will have
$t_{j_0}=r_{i_0,j_0}$).

\geo then \defn{rebuilds} level $j_0$. 
For each $j \in [\ell], i\in [C]$ define $\mathcal{I}_j^{(i)}$ to be the $\min(s_{i},
c_{i,j})$ smallest items of size class $i$, and define
$$\mathcal{I}_j = \bigcup_{i\in [C]} \mathcal{I}_j^{(i)}.$$
Define $\overline{\mathcal{I}_j}$ to be all items except for items $\mathcal{I}_j$.
To rebuild, \geo rearranges level $j_0 - 1$ to ensure that for all $j \ge
j_0$ the items $\mathcal{I}_{j}$ appear to the right of items
$\overline{\mathcal{I}_j}$. This arrangement is
well-defined since for each $j$ we have $\mathcal{I}_{j + 1} \subseteq \mathcal{I}_{j}$. We 
justify in \cref{lem:geocorrect} why \geo can always find any such $\mathcal{I}_{j}$ as
a subset of level $j_0-1$, and so achieve this arrangement by rearranging only
level $j_0 - 1$. \geo labels the items $\mathcal{I}_{j}$ as level $j$ for all $j \ge j_0$.

Let $J$ be the set of all levels $j\in [j_i^*]$ such that $t_j \ge r_{i_0,j}$. To finish
the rebuild of level $j_0$ \geo resamples $r_{i_0,j}$ randomly from
$[\ceil{c_{i,j}/4}, \ceil{c_{i,j}/3}]\cap \N$ for each $j\in J$. \geo
considers this a free rebuild for levels $j\in J\setminus\set{j_0}$ by the size
class $i_0$ item.

\begin{algorithm}
\caption{Rebuild on an insert of item $I$}\label{alg:rebuilds-geo}
 \begin{algorithmic}[1]
\State Let $i_0$ denote the size class of item $I$.
\State  For each $j\in [\ell]$, let $t_j$ denote the number of inserts since the previous time that level $j$ has been rebuilt (including free rebuilds) by a size class $i_0$ item. 
\State Let $j_0$ be the smallest $j\in [j_{i_0}^*]$ such that $t_j \ge r_{i_0,j}$.
\State For each $j\in [\ell], i\in [C]$ define $\mathcal{I}_j^{(i)}$ to be the $\min(s_i, c_{i,j})$ smallest items of size class $i$.
\State  For $j\in [\ell]$ define $\mathcal{I}_j = \bigcup_{i\in [C]} I_j^{(i)}$ for all $j\in [\ell]$.
\State \textbf{Assert:} items $\mathcal{I}_j$ are present in level $j-1$
\For{$j\gets j_0, j_0+1, \ldots, \ell$}
    \State Arrange level $j-1$ so that items $\mathcal{I}_j$ are on the right, and other items are on the left.
    \State Label items $\mathcal{I}_{j}$ as level $j$.
\EndFor
\State Let $J$ be the set of all levels $j\in [j_i^*]$ such that $t_j \ge r_{i_0,j}$. 
\State Resample $r_{i_0,j}$ randomly from
$[\ceil{c_{i,j}/4}, \ceil{c_{i,j}/3}]\cap \N$ for each $j\in J$. 
\State \geo considers this a free rebuild for levels $j\in J\setminus\set{j_0}$ by the size
class $i_0$ item.
\end{algorithmic}
\end{algorithm}

\paragraph{Handling Inserts}
\geo handles inserts as follows: Place inserted items directly after the current
final item in memory. When an item of size class $i$ is inserted we add it to
level $\ell$. As discussed earlier, inserts trigger level rebuilds when level
rebuild thresholds are reached.

\begin{algorithm}
 \caption{Inserts}\label{alg:inserts-geo}
 \begin{algorithmic}[1]
    \State Place $I$ immediately after the final
    item of level $\ell$.
    \State Perform necessary level rebuilds.
 \end{algorithmic}
 \end{algorithm}

\paragraph{Handling deletes}
Suppose an item $I$ of size class $i$ is deleted. If item $I$ is not in level
$j_i^*$ \geo finds the item $I'$ of size class $i$ in level $j_i^*$ which will have
$|I'|\le |I|$ and \defn{swaps} $I,I'$; in \cref{lem:geocorrect} we argue that
there is some such item $I'$. To swap items $I$ and $I'$ \geo places item $I'$ where
item $I$ used to be. Next,
\geo\xspace \defn{inflates} the size of item $I'$ to $|I|$. That is, \geo will
logically consider item $I'$ to have size $|I|$ until the next \defn{waste
recovery} step at some later time (or until $I'$ is further inflated). We
describe the waste recovery procedure after finishing the description of how
\geo handles deletes.

After swapping item $I$ (if necessary) and removing $I$ from memory \geo\xspace
\defn{compacts} level $j^*_i$, i.e., arranges the items of level $j^*_i$ to be
contiguous and left-aligned with the final element that is not part of level
$j_i^*$ (or left-aligned with $0$ if all elements are part of the level). 
As discussed earlier a delete triggers level rebuilds when level rebuild
thresholds are reached.

 \begin{algorithm}
 \caption{Delete item $I$}\label{alg:delete-geo}
 \begin{algorithmic}[1]
 \Require $\waste$ so far and waste recovery threshold $T$.
  \State Remove item $I$ from memory.
  \State Let $i$ be the size class of item $I$.
  \State Let $j_i^*$ be the largest $j$ such that $c_{i,j} \geq 1$, i.e., so that $I$ fits in level $j$.
  \State \textbf{Assert:} the smallest item of size class $i$ is guaranteed to be in level $j_i^*$. \label{assert:geo-covering}
  \If{item $I$ is not in level $j_i^*$}
    \State Find the size class $i$ item $I'$ in level $j_i^*$.
    \State Place item $I'$ where item $I$ used to be.
    \State Logically inflate the size of
    $I'$ to be $|I|$.\label{line:geo-resizable-inflate} 
  \EndIf
  \State Let $b_i$ be the maximum size of an item of size class $i$.
  \State $\waste \gets \waste + \eps^{1/2}b_i$.
  \State Compact level $j^*_i$.
  \State Perform necessary level rebuilds.
  \If{$\waste > T$}
  \State Perform a waste recovery step.
  \EndIf
 \end{algorithmic}
 \end{algorithm}

\begin{figure}
    \centering
    \includegraphics[width=1\linewidth]{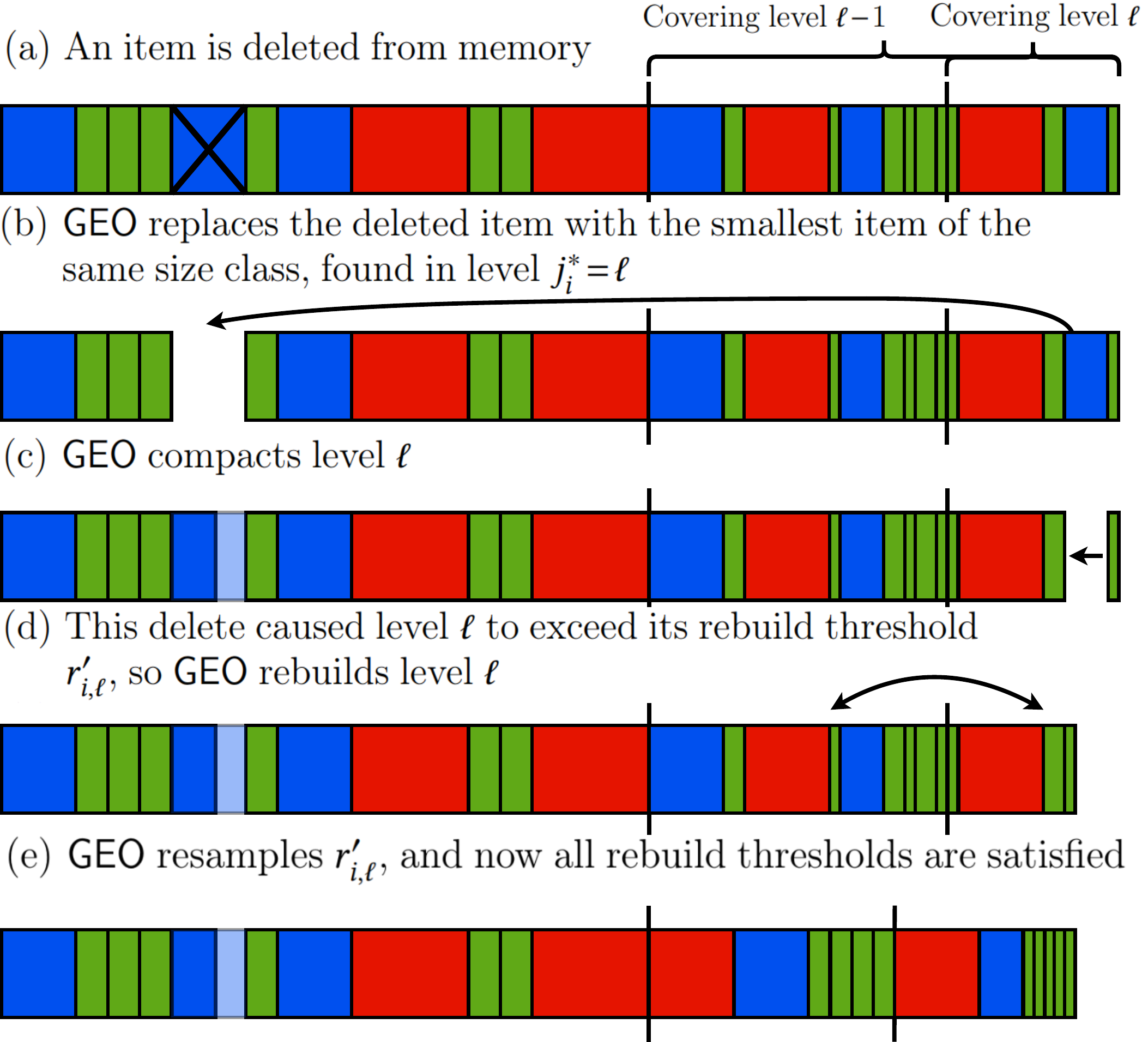}
    \caption{\geo handling a delete.}
    \label{fig:geo-pic}
\end{figure}

\paragraph{Implementing waste recovery}
When handling deletes \geo performs swaps which cause \defn{waste}.
Suppose \geo swaps items $I,I'$ both of size class $i$, and let $b_i$ be the
maximum size of an item in size class $i$.
Define $w_i = \eps^{1/2}b_i$. Then,
$|I|-|I'| \le b_i - b_i/\beta \le w_i$. We say that the swap causes waste $w_i$. 
\geo's waste recovery steps will ensure that the total waste in memory never
exceeds $\eps$. This will ensure that the total size of gaps introduced by swaps
never exceeds $\eps.$ 

We consider \geo to have performed a free waste recovery step at the beginning
(this is a logical operation incurring zero cost, useful as a sentinel value).
At every waste recovery step (and at the beginning) \geo
samples threshold $T\gets (\eps/2,\eps)$ uniformly to determine how much
waste to allow before triggering the next waste recovery step. More precisely,
(excluding the free waste recovery step at the beginning)
if the waste recovery threshold was $T$ and the most recent delete would cause
the waste introduced since the previous waste recovery step to be $W\ge T$ then
\geo performs a waste recovery step. We consider the waste at the start of this
waste recovery step to be $W-T$: that is, waste from the final delete which caused
the waste recovery step \defn{overflows} to count towards the next waste
recovery step. To perform the waste recovery step \geo logically reverts all
items to their original sizes, arranges the items to be contiguous
and left-aligned with $0$, and then rebuilds level $1$.

\begin{algorithm}
    \caption{Waste Recovery Step}\label{alg:waste-recovery-step-pseudo}
    \begin{algorithmic}[1]
      \State Revert all logical changes to item sizes.
      \State Compact all items to be contiguous and left-aligned.
      \State Rebuild level $1$.
      \State Let $W$ be the waste since the previous waste recovery step.
      \State $\waste\gets W-T$.
      \State Resample waste recovery threshold $T\gets (\eps/2, \eps)$.
    \end{algorithmic}
\end{algorithm}

\geo is depicted in \cref{fig:geo-pic}.
Now we analyze \geo.

\begin{lemma}\label{lem:geocorrect}
    \geo is well-defined and correct (i.e., allocates items within the allowed space).
\end{lemma}
\begin{proof}
First we show that the level size invariant is maintained. 
This follows from the following stronger property: for all $i\in [C], j\in
[\ell]$ there are at most $2c_{i,j}$ items of size class $i$ in level $j$.
First note that this is sufficient to prove the level size invariant because 
$2c_{i,j}$ items of size class $i$ take up at most $2m_j$ space. 
Now we argue that the rebuild procedure maintains this stronger property.
For all $i\in [C],j>j_i^*$, whenever an item of
size class $i$ is inserted some level $j\in [j^*_i]$ is rebuilt, and so no
items of size class $i$ can remain in level $j$ because $c_{i,j}=0$. For $i\in [C],j\in [j^*_i]$, 
the insert level rebuild threshold $r_{i,j}$ satisfies $r_{i,j}\le c_{i,j}$.
That is, level $j$ will be rebuilt before there are more than $c_{i,j}$ inserts of size class $i$ items, and thus level $j$ can never have more than $2c_{i,j}$ size class $i$ items. 

To show that \geo is correct, we need to verify that after every update
for every $j_0\in [\ell], j\ge j_0$, items $\mathcal{I}_j$ are contained in level $j_0-1$.
This is necessary for \geo's rebuild operation to be well-defined. 
Because for each $j$ we have $\mathcal{I}_{j+1} \subseteq \mathcal{I}_{j}$ it suffices to show that 
for each $j\in [\ell]$ the items $\mathcal{I}_j$ are contained in level $j-1$.
Recalling the definition of $\mathcal{I}_{j}$ our goal is to show that for all
$i\in [C], j\in [\ell]$ the $\min(s_i, c_{i,j})$ smallest items of size class
$i$ are contained in level $j-1$. Fix some size class $i$. 
First, observe that for $j>j_i^*$ we have $c_{i,j} = 0$, so the claim is
vacuously true. We prove the claim for $j\in [j_i^*]$ by induction on $j$.
The claim is clearly true for $j=1$: level $0$ is all of memory, so in
particular contains the $s_i$ smallest items of size class $i$. Assume the claim for $j\in
[j_i^*-1]$, we prove the claim for $j+1$.

Because the claim is true for $j$ we have that whenever level $j$ has just been
rebuilt it will contain the $\min(s_i, c_{i,j})$ smallest elements of size class
$i$, because these items were present in level $j-1$. 
We consider two cases.

\textbf{Case 1}: $c_{i,j}\le 3$. Then $r_{i,j}'=1$, i.e., level $j$ will be rebuilt every
time a size class $i$ item is updated. By our inductive hypothesis rebuilding
level $j$ results in the $\min(s_i,c_{i,j})\ge\min(s_i, c_{i,j+1})$ smallest
size class $i$ items being in level $j$, so the claim holds here.

\textbf{Case 2}: $c_{i,j}>3$. Then 
$$c_{i,j} - \ceil{c_{i,j}/3} \ge \ceil{c_{i,j}/2} \geq c_{i, j+1}.$$
Thus, if the smallest $c_{i,j}$ items of size class $i$ were placed in level $j$
on the previous level $j$ rebuild the smallest $c_{i,j+1}$ items of size class
$i$ will still be in level $j$ at all times until the next rebuild.
On the other hand, if the smallest $s_i$ items of size class $i$ were placed in
level $j$ on the previous level $j$ rebuild then no items of size class $i$ can
exit level $j$ until the next level $j$ rebuild: there are no size class $i$
items outside of level $j$ to trigger a swap.
Inserts are added to level $j$ so they do not break the invariant.
This proves the claim for $j+1$, so by induction the claim is true for all $j$.

In order for deletions to be well-defined, we must also show that after every
update for every size class $i$ with $s_i > 0$, the smallest element of size
class $i$ is in level $j_i^*$. This holds because we always have $r_{i,j_i^*}' =
1$, so every time level $j_i^*$ loses the smallest item of size class $i$ it will be
rebuilt, and when it is rebuilt it must have the smallest item because level
$j_i^*-1$ always contains items $\mathcal{I}_{j_i^*}$. All inserted items are
inserted to level $j_i^*$, so again insertions cannot break the invariant.

%This implies, in particular, that as long as there are any items of size class $i$ in memory, then the smallest (logically) item of size class $i$ is in level $j_i^*$, which ensures that the delete procedure is well-defined.

%To show that \geo is correct, we must verify the following assertion 
%made when describing the delete procedure:
%whenever an item $I$ of size class $i$ is deleted 
%there exists an item $I'$ in level $j_i^*$ with $|I'|\le |I|$.

%We argue inductively that the following invariant holds: if there are any items
%of size class $i$ in memory then the smallest (logically) item of size class $i$
%is in level $j_i^*$. The claim starts true, as memory starts empty. It is clear
%from the definition of a rebuild that after level $j_i^*$ is rebuilt either (a)
%there are no items of size class $i$ in memory, or (b) the smallest item of
%size class $i$ is contained in level $j$. Whenever an item of size class $i$ is
%deleted it will trigger a rebuild of level $j_i^*$ because the level rebuild
%threshold $r_{i,j_i^*}$ satisfies $r_{i,j_i^*}\le c_{i,j_i^*}=1$. Thus, deletes
%preserve the invariant. Whenever an item of size class $i$ is inserted it is
%placed in level $\ell$ and the insert will trigger a sequence of rebuilds
%eventually resulting in, at least, level $j_i^*$ being rebuilt. Thus, inserts
%also preserve the invariant.
    
Now, we argue that \geo always places items within the memory bounds. If we
consider items at their inflated (i.e., logical) sizes then the items are contiguous.
Recall that the total size of gaps introduced into the array by inflation is
bounded by the waste recovery threshold $T < \eps$. Hence, if there is $L$
total size of items present at some time \geo allocates all items in the memory
region $[0,L+\eps]$. That is, \geo is resizable.

For completeness we check the fact claimed when defining the size classes, that
$C\le \widetilde{\bigO}(\eps^{-1/2})$. Indeed, 
$$C \le \bigO(\log_\beta \eps^{-4.5}) \le \bigO\paren{\frac{\log\eps^{-1}}{\log
(1+\eps^{1/2})}}\le \bigO(\eps^{-1/2}\log\eps^{-1}).$$ 
\end{proof}

Before analyzing \geo's expected update cost we need two simple lemmas.
The proofs are deferred to \cref{sec:appendixobvious}.
\begin{restatable}{lemma}{continuousCats} \label{lem:hittingcontinuouscats}
    Fix $a,b,W\in \R$ with $0\le a<b$, and $W>0$.
    Let $x_1,x_2,\ldots$ be uniformly and independently sampled from $(W/2, W)$.
    The probability that there exists $j$ with $\sum_{i\le j} x_i \in [a,b]$ is
    at most $4(b-a)/W$.
\end{restatable}
\begin{restatable}{lemma}{discreteCats} \label{lem:hittingcats}
    Fix integers $y,N\in \N$.
    Let $x_1,x_2,\ldots$ be uniformly and independently sampled from $[\ceil{N/4},
    \ceil{N/3}]\cap \N$. The probability that there exists $j$ with $\sum_{i\le j} x_i
    = y$ is at most $100/N$.
\end{restatable}
    
  Now we analyze the worst-case expected cost of an update. For the remainder of
  the proof we fix an arbitrary update index $u\in \N$; our goal is to show that
  the expected cost on update $u$ is small. We break the cost of this update
  into $\Gamma_W + \Gamma_S + \Gamma_R$, where $\Gamma_W$ is the cost of waste
  recovery, $\Gamma_S$ is the cost of swapping elements and compacting
  to handle deletes, and $\Gamma_R$ is the cost of rebuilding levels. 
  We will show $\E[\Gamma_W+\Gamma_S+\Gamma_R]\le \widetilde{\bigO}(\eps^{-1/2})$.
  \begin{lemma}
    The expected cost due to waste recovery on update $u$
    satisfies $\E[\Gamma_W] \le\widetilde{\bigO}(\eps^{-1/2}).$
  \end{lemma}
  \begin{proof}
  % \geo only employs randomization in a single place: choosing the threshold $T$ for the allowable waste before triggering a waste recovery step. 
  % Thus, it suffices to consider an adversary that only is adaptive on waste recovery steps. In other words, the adversary gains no power by making decisions between waste recovery steps.
  % Thus, we can think of an adaptive adversary as fixing a sequence of updates 
  % that it is planning to perform  after every waste recovery step and only switching its strategy on the next waste recovery step.
  % We call the time between waste recovery steps a \defn{phase}.
  % Fix some phase. 
  % Let $x_1, x_2, \ldots,$ be the sequence of sizes of items that will be deleted. 
  % Note that the adversary may insert items as well, and may also delete items that it inserted earlier in the phase.
  % Now we bound the expected amortized cost of \geo's next waste recovery step based on $x_1,x_2,\ldots$.
  
  If update $u$ is an insert then \geo never performs a waste recovery step on update $u$. 
  Thus, for the purpose of analyzing the cost of waste recovery it suffices to
  consider the case that update $u$ is a delete.
  Let update $u$ be the $u'$-th
  delete, and let the corresponding deleted item be of size class $i$.
  Let $x_1, x_2, \ldots,$ be the sequence of sizes of items that will be deleted. 
  For each $k$, let $w_k$ be the space wasted by delete $k$, i.e., the maximum
  size difference between items in the size class of the $k$-th deleted item; we have $w_k \le O(\eps^{1/2}x_k)$.
  \geo repeatedly samples waste recovery thresholds $T_1, T_2,\ldots$ independently from $(\eps/2,\eps)$. 
  A waste recovery step occurs on update $u$ if there exists $M\in \N$ such that
  update $u$ causes the total waste to cross the $M$-th waste recovery threshold, i.e., 
  so that 
  $$\sum_{t=1}^{M}T_t \in \left[\sum_{k=1}^{u'-1} w_k, \sum_{k=1}^{u'}w_k\right].$$
  Here we have used the fact that waste \emph{overflows} between waste recovery steps. 
    By \cref{lem:hittingcontinuouscats} the probability that such an $M$ exists
    is at most $4w_{u'}/\eps$. 
    If $u$ must perform waste recovery the cost is at most $1/x_{u'}$.
    Thus, the expected cost of waste recovery on delete
    $u'$ is at most 
    $$\frac{4w_{u'}}{\eps} \frac{1}{x_{u'}} \le O(\eps^{-1/2}).$$

\end{proof}

\begin{lemma}
The cost due to swapping and compacting on update $u$ satisfies $\Gamma_S \le \widetilde{\bigO}(\eps^{-1/2})$.
\end{lemma}
\begin{proof}
When an item $I$ of size class $i$ is deleted \geo potentially 
moves an item $I'$ also of size class $i$ to replace item $I$. This costs $O(1)$.
After removing item $I$ from memory \geo must compact level $j_i^*$.
The cost of this compaction is bounded by the maximum possible size of level $j_i^*$ divided by $|I|$.
The size of level $j_i^*$ is at most $2Cm_{j_i^*}$ by the level size invariant.
We claim $|I|\ge m_{i,j_i^*}/4$. If $j_{i}^* < \ell$ but $|I|\le m_{i,j_i^*}/4$ then $I$'s size class can fit on a deeper level than $j_i^*$, contradicting the definition of $j_i^*$. If $j_i^* = \ell$ then the inequality is true because $m_{i, \ell}/4$ is smaller than the minimum item size.
Thus, the cost of compacting level $j_i^*$ is at most 
$$\frac{2C m_{j_i^*}}{|I|} \le \bigO(C)\le \widetilde{\bigO}(\eps^{-1/2}).$$
Note that there is zero cost here on an insert.
\end{proof}

\begin{lemma}\label{lem:rebuild-level-cost-geo}
The expected cost due to rebuilding levels on update $u$ satisfies $\E[\Gamma_R]\le \widetilde{\bigO}(\eps^{-1/2})$.
\end{lemma}
  \begin{proof}
  There are only $\ell = \Theta(\log\eps^{-1})$ levels. Thus, it suffices to fix
  a level $j\in [\ell]$ and show that the expected cost due to rebuilding
  level $j$ on update $u$ is at most $O(C)\le \widetilde{\bigO}(\eps^{-1/2})$.
  Fix $j\in [\ell]$ and let update $u$ be an item $I$ of size class $i\in [C]$.
  First, note that if $j>j_i^*$ level $j$ is never rebuilt by an item of size class $i$.
  So, we may assume $j\in [j_i^*]$.
% We consider two cases.
% \paragraph{Case 1: $j > j_i^*$}
% In this case $I$ is too large to fit in level $j$. In
% particular we have $|I| \ge m_j/4$. 
% Using the level size invariant to bound the size of level $j$, we have that the
% cost of rebuilding level $j$ is bounded by
% $$\frac{2Cm_{j}}{|I|}\le \bigO(C).$$
% \paragraph{Case 2: $j \le j_i^*$}
We claim the probability that update $u$ triggers a rebuild of level $j$ is
at most $100/c_{i,j}$.
Suppose that update $u$ is an insert; the case of deletes is symmetric.
Let $u$ be the $u'$-th insert of a size class $i$ item.
Let the sequence of insert rebuild thresholds $r_{i,j}$ for level $j$ on items of
size class $i$ chosen by \geo be $x_1,x_2,\ldots$. Recall that these are sampled
from $[\dceil{c_{i,j}/4}, \dceil{c_{i,j}/3}]\cap \N$.
Then, the probability of update $u$ triggering a rebuild of level $j$ is
precisely the chance that there is some $k^*$ such that $\sum_{k\le k^*} x_k 
 = u'$.
This is exactly the situation described in \cref{lem:hittingcats}. Thus, the
probability that $u$ triggers a rebuild of level $j$ is at most $100/c_{i,j}$ in
this case. 

If update $u$ triggers a rebuild of level $j$ the cost is at most $2Cm_j/|I|$
(and may even be $0$ in the case that it was a free rebuild, i.e., covered
by a larger level's rebuild). Thus, the expected cost of rebuilding level $j$ on
update $u$ is at most
\begin{equation}\label{eq:niceamorebuild}
\frac{200Cm_j}{c_{i,j}|I|}.
\end{equation}
Recall the definition of $c_{i,j}$: 
if $b_i$ denotes the maximum possible size in size class $i$ then $c_{i,j} =
\floor{m_j/b_i}$. Thus, because $c_{i,j}\ge 1$ we have
$$c_{i,j}\cdot |I|\ge c_{i,j}b_i/\beta = \floor{m_j/b_i}b_i/\beta \ge m_j/(2\beta)\ge m_j/4.$$
This shows that \cref{eq:niceamorebuild} is bounded by $O(C)$.
\end{proof}

Thus, the expected cost of update $u$ is at most
$$\E[\Gamma_S+\Gamma_W+\Gamma_R]\le \widetilde{\bigO}(\eps^{-1/2}).$$
  
\end{proof}

% \begin{remark}
% In-fact, the construction of \cref{thm:geo} can be de-randomized 
% with only $\polylog(\eps^{-1})$-times worse amortized update cost. 
% In particular, rather than having a randomized waste recovery step, 
% we could instead partition the $\eps$ free space amongst $\Theta(\log \eps^{-1})$ ``sub-pools'' 
% of items, which are groups of size classes consisting of items with relatively similar sizes.
% Then, a waste recovery step is triggered when the amount of space wasted by any sub-pool exceeds $\Omega(\eps/\log\eps^{-1})$.
% Doing so gives a deterministic allocator with
% amortized update cost $\widetilde{\bigO}(\eps^{-1/2})$.
% \end{remark}

\subsection{Combining \geo with Kuszmaul's Allocator}\label{subsec:combine}
Throughout the subsection we say that an item is \defn{large} if it has size larger
than $\eps^4$, and \defn{tiny} otherwise. In \cref{thm:geo} we described the
\geo allocator which can handle large items. In \cite{Ku23} Kuszmaul constructed
an allocator based on min-hashing, which we call \tinyhash, that can handle tiny
items with worst-case expected update cost $\bigO(\log\eps^{-1})$. Kuszmaul's
\tinyhash is even a resizable allocator, like \geo.
Combining \geo and \tinyhash immediately yields:

\begin{corollary}\label{cor:onehalf}
  There is an allocator for arbitrary items with worst-case expected update cost
  $\widetilde{\bigO}(\eps^{-1/2}).$
\end{corollary}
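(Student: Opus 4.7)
The plan is to prove \cref{cor:onehalf} by composing \geo with \tinyhash in two adjacent sub-regions of memory, routing each update to the appropriate sub-allocator and using \tinyhash's insert/delete interface to handle the boundary shift.

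I would call an item \emph{large} if its size exceeds $\eps^4$ and \emph{tiny} otherwise, so that (by \cref{thm:geo} and by \cite{Ku23}, respectively) \geo handles the large items and \tinyhash handles the tiny items. Let $L, T$ denote the total sizes of the two classes, so $L+T \le 1-\eps$. I would instantiate each sub-allocator with free-space parameter $\eps/2$, so that their slacks sum to $\eps$; I would then place the large items in $[0, L+\eps/2]$ using \geo, and the tiny items in the adjacent interval $[L+\eps/2, L+T+\eps]$ using \tinyhash. All items then live in $[0, L+T+\eps]$, so the combined allocator is itself resizable.

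Tiny updates go directly to \tinyhash, paying $O(\log\eps^{-1})$. A large update of size $k$ proceeds in three steps: (i) delete from \tinyhash, and cache off, the tiny items that must leave the tiny region -- those in the leftmost $k$ of the tiny region on an insert, or those in the rightmost $k$ on a delete; (ii) run \geo's update, which costs $\tilde O(\eps^{-1/2})$ and shifts the tiny region's base by $\pm k$; (iii) reinsert the cached tiny items into \tinyhash's shifted region, where they fit into the freshly vacated space. The cached items have total size at most $k$, so by \tinyhash's per-update bound steps (i) and (iii) together contribute $O(\log\eps^{-1})$ to the cost, and the overall worst-case expected cost is $\tilde O(\eps^{-1/2}) + O(\log \eps^{-1}) = \tilde O(\eps^{-1/2})$ per update, as required.

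The combination is essentially immediate, and the only real subtlety -- the reason the authors describe it as only \emph{fairly} easy rather than trivial -- is that \tinyhash must gracefully accept its shifted base between steps (i) and (iii). This is what \tinyhash's resizability buys us: the allocator is guaranteed to operate correctly on any interval of length $T+\eps/2$ once its base is fixed, so translating its base by $\pm k$ (while the items in the overlap have been removed and the items in the non-overlap portion still physically lie inside both the old and new regions) leaves it in a consistent state into which the cached items can be reinserted.
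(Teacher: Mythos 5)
Your layout choice creates a genuine gap. The statement as given does not require the combined allocator to be resizable, and the paper's proof therefore avoids any interaction between the two sub-allocators entirely: it anchors \geo at the left end of memory (occupying $[0,L_1+\eps/3]$) and \tinyhash at the \emph{right} end, growing backwards (occupying $[1-L_2-\eps/3,1]$); since $L_1+L_2\le 1-\eps$ the regions are always disjoint and no boundary ever needs to be shifted. By instead placing \tinyhash immediately after \geo's region, you have signed up for the harder, resizable version of the corollary, and your mechanism for shifting \tinyhash's base does not work as described. Resizability only guarantees that a sub-allocator keeps its items in a \emph{prefix} $[\mathrm{base},\mathrm{base}+T'+\eps/2]$ of its region; it says nothing about which physical locations inside that prefix are occupied. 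So after you delete the tiny items currently lying in the leftmost $k$ of the tiny region, that strip is not ``freshly vacated'': \tinyhash may have moved \emph{other} items into it while servicing those deletes, and in any case the space it is guaranteed to free up is at the right end of its prefix, not at the left. Likewise, \tinyhash's internal invariants (Kuszmaul's slab structure requires a slab of size $s$ to start at an integer multiple of $s$ relative to the base) are destroyed if you translate the base by an arbitrary $k$ while leaving the surviving items in place, so ``translating the base leaves it in a consistent state'' is exactly the claim that needs a construction, not a consequence of resizability. A further problem is that your step (i) makes the update sequence fed to \tinyhash depend on \tinyhash's own (random) placement decisions, whereas its $O(\log\eps^{-1})$ expected-cost guarantee is against an oblivious adversary.

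The paper handles precisely this difficulty in \cref{thm:kuszmaulflex}: it opens up \tinyhash's internal structure, observes that memory decomposes into interchangeable \emph{memory units} that can be permuted freely, and builds the relocatable allocator \flexhash, which absorbs base shifts (``external updates'') into a buffer and occasionally rotates whole memory units from one end to the other, achieving $O(1)$ expected cost per external update. If you only want the non-resizable statement, replace your layout with the two-ends layout and your proof becomes correct and essentially the paper's; if you want the adjacent layout, you need something like \flexhash in place of your delete-shift-reinsert step.
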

\begin{proof}
Instantiate \geo with $\eps/3$ free space starting at the beginning of memory
and instantiate \tinyhash with $\eps/3$ free space, 
but starting at the end of memory and growing backwards. 
When we get an update of a tiny item we send the update to \tinyhash, 
and when we get an update of a large item we send the update to \geo.
The correctness of this approach follows from the fact that \tinyhash and \geo are resizable. 
In particular, if at some time there is $L_1$ total size of tiny items 
present and $L_2$ total size of large items present then 
\geo only places items in the memory region $[0, L_1+\eps/3]$, 
and \tinyhash only places items in the memory region $[1-L_2-\eps/3, 1]$.
Because $L_1+L_2\le 1-\eps$ these intervals are disjoint.

This allocator inherits the $\max$ of the worst-case expected update costs in
$\geo,\tinyhash$ as its expected update cost.
\end{proof}

In fact, by exploiting the modular structure of \tinyhash, rather than simply
using \tinyhash as a black box we can strengthen \cref{cor:onehalf} to obtain
the same (asymptotically) update cost, but with a resizable allocator. Now, the
layout of memory will be space $[0,L_1+\eps/2]$ allocated to \geo, where $L_1$
is the total size of large items and then space $[L_1+\eps/2, L_1+L_2+\eps]$
allocated to \tinyhash where $L_2$ is the total size of tiny items present. As
before, \geo handles large items and \tinyhash handles tiny items. The
difference now is that \tinyhash doesn't have a fixed start location: as the
region of memory managed by \geo changes size we have to ensure that the region
of memory managed by \tinyhash starts right after \geo's memory region ends. 
That is, in addition to the usual \defn{internal updates}, we have to modify
\tinyhash to support \defn{external updates}, which are requests of the form
``rearrange all of memory to start at a location $k$ ahead or $k$ behind its
current start point''. Such an external update is considered an operation of
``size'' $k$, and a resizable allocator capable of handling external updates is
called \defn{relocatable}. The cost of an external update is the total size $L$
of items moved to handle the external update divided by the size $k$ of the
external update.
We now show:
\begin{restatable}{lemma}{lemmaflexhash}
\label{thm:kuszmaulflex}
If all internal updates are tiny and all external updates are large, there is a
relocatable allocator achieving worst-case expected internal update cost
$\bigO(\log \eps^{-1})$, and worst-case expected external update cost
$\bigO(1)$.
\end{restatable}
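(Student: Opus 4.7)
The plan is to extend \tinyhash to the relocatable setting by exploiting the size gap between the two types of updates: external updates have size $\ge \eps^4$ while internal updates involve items of size $< \eps^4$. The core observation is that \tinyhash's internal placement rules can be stated relative to the start of its memory region, so the region itself can be re-anchored by updating a single base offset, with physical work needed only for items crossing the region's boundary.

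First I would internally parameterize \tinyhash by a base offset $\sigma$, so that the allocator views its memory region as $[\sigma, \sigma + L_2 + O(\eps)]$. Internal updates are delegated to \tinyhash unchanged, with coordinates translated by $\sigma$. Since \tinyhash's cost analysis depends only on the relative layout of items within its region and is translation-invariant, the worst-case expected internal update cost remains $\bigO(\log\eps^{-1})$.

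For an external update of size $k$ requesting a shift by $\pm k$, I update $\sigma \leftarrow \sigma \pm k$, which virtually moves all items for free, and then physically relocate the items that now lie outside the shifted region (at one boundary) into the newly exposed space (at the other boundary). The total size of displaced items is at most $k$, so the raw movement cost is $\bigO(k)$, giving an amortized external cost of $\bigO(1)$. The ratio of update sizes $\eps^4$ versus $<\eps^4$ is what ensures that a single external update is always ``large enough'' to pay for a full boundary region worth of tiny items.

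The main obstacle is integrating the $\bigO(k)$ worth of relocated items back into \tinyhash's structure without inflating the movement cost by a $\log\eps^{-1}$ factor (which is what a black-box use of \tinyhash would give). This is where the modular structure of \tinyhash must be opened up: its min-hashing layout partitions memory into essentially independent bins, so a boundary region of total size $k$ can be rebuilt from scratch in time linear in $k$, rather than the per-item insertion cost $\bigO(\log\eps^{-1})$ one would pay via ordinary updates. Once this localized boundary-rebuild primitive is established, the rest of \tinyhash is untouched, and the external cost is $\bigO(1)$ as claimed. A secondary technicality is handling external updates smaller than a single \tinyhash bin; for these I would accumulate shifts in a buffer (absorbed into the $\eps/2$ slack) and trigger the physical boundary rebuild only when enough shift has accumulated, charging its cost to the external updates that caused it via a straightforward potential-function argument.
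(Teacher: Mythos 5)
Your high-level plan (a virtual base offset, physical work confined to the region boundary, and a buffer to absorb external updates smaller than the structural granularity) is in the right spirit, but it has two genuine gaps. First, the ``localized boundary-rebuild primitive'' is asserted, not established. When the region shifts by $k$, you propose to pick up the individual tiny items in a width-$k$ boundary strip and re-insert them at the other end in $\bigO(k)$ work. But \tinyhash's layout is not bin-local in the way you need: items live in slabs with strict size and alignment constraints, and a strip of width $k \ll M$ (where $M=\bigO(\eps^3)$ is the largest slab size) cuts through slabs rather than respecting them, so ``rebuilding'' that strip is exactly a batch of insertions into a hashing structure with no obvious linear-work guarantee. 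The paper's proof sidesteps this entirely by never moving partial structures: it observes that \tinyhash decomposes into interchangeable \emph{memory units} of size $M$ that never have items spanning across them, and it handles shifts by \emph{rotating whole memory units} from one physical end to the other. No item is ever re-inserted into \tinyhash's data structure; only the permutation assigning logical memory units to physical positions changes. You would need to either prove your linear-time boundary rebuild (nontrivial) or switch to working at the memory-unit granularity.

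Second, your treatment of small external updates gives the wrong kind of bound. You accumulate shift in a buffer and ``trigger the physical boundary rebuild only when enough shift has accumulated, charging its cost \ldots via a straightforward potential-function argument.'' That yields \emph{amortized} $\bigO(1)$, but the lemma claims \emph{worst-case expected} $\bigO(1)$: with a deterministic threshold, the single external update that crosses it pays $\bigO(M)$ total movement against its own size $x$, i.e.\ cost $\bigO(M/x)$, which can be as large as $\bigO(\eps^{-1})$ when $x=\Theta(\eps^4)$ and $M=\Theta(\eps^3)$. The paper's proof randomizes the rebuild thresholds ($R_i,R_i'\gets(2M,4M)$, resampled after each rebuild, with overflow) and invokes \cref{lem:hittingcontinuouscats} to show that an external update of size $x$ triggers a rebuild with probability only $\bigO(x/M)$, so the expected cost is $\bigO(M/x)\cdot\bigO(x/M)=\bigO(1)$ for every update individually. (It also partitions external updates into $\bigO(\log\eps^{-1})$ geometric update-types, each with its own share of the $\eps/2$ buffer and its own counters, so that the invariant $B_i\in[0,16M]$ keeps the total buffer within the allotted free space.) Without this randomization, or something equivalent, your construction does not meet the stated worst-case expected guarantee.
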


\begin{proof}
\tinyhash operates by breaking memory into \defn{slabs}, which are contiguous chunks of memory.
Let $M$ denote the largest possible size of a slab. \tinyhash satisfies $M\le \bigO(\eps^3)$.
Slabs have specific sizes and allowed locations.
In particular, slabs are of size $M/2^i$ for some $i\in \Z_{\ge 0}$.
The start locations of slabs must obey the following \defn{alignment property}:
a slab of size $L$ must be placed at a location $i\cdot L$ for some integer $i\in \Z_{\ge 0}$.
In particular, the smaller slabs nest perfectly within the larger slabs. 
We refer to intervals of the form $[M\cdot i, M\cdot (i+1)]$ for $i\in \Z_{\ge 0}$ as \defn{memory units}.
Because \tinyhash never places items spanning across memory units,
rearranging memory units doesn't break \tinyhash's correctness.

  We exploit this modular structure of \tinyhash to make a relocatable version of \tinyhash which we call \flexhash.
  \flexhash uses $\eps/2$ free space to create a \defn{buffer}. 
  \flexhash partitions the external update sizes $(\eps^4, 1]$ geometrically into $C\le \bigO(\log \eps^{-1})$
  \defn{update-types}, with the $i$-th update-type consisting of updates with size in the interval $(2^{i-1}\eps^4, 2^{i}\eps^4]$.
  \flexhash will use the buffer to ``hide'' the external updates from \tinyhash, which it will run as a subroutine. 
  \flexhash reserves the remaining $\eps/2$ free space for the normal execution of \tinyhash.

  \flexhash splits the buffer into $C$ parts, one for each update-type.
  We use the term \defn{central memory} to refer to the region of memory in which \tinyhash will operate. 
  For each $i\in [C]$ define variable $B_i$. $B_i$ stores how much of update-type $i$'s portion of the buffer has been used. \flexhash will maintain as an invariant that $B_i\in [0, 16M]$ for all $i\in [C]$ at all times.
  Furthermore, \flexhash will guarantee that the distance between 
  the start of central memory and the actual start of all of memory is at most
  $\sum_{i\in [C]} B_i$. Let $s$ denote the number of memory units that exist at
  some time. Let the $i$-th memory unit denote the memory unit that \tinyhash
  places at location $iM$. Let $\Delta$ denote the starting location of the
  memory region assigned to \flexhash. Then, there is some permutation
  $\pi:[s]\to [s]$ such that \flexhash places the $i$-th memory unit starting at
  location $\Delta + \sum_{j\in [C]}B_j+\pi_i \cdot M$. The contents of
  memory unit $i$ are identical between reality and the simulation of \tinyhash.

Any action that \tinyhash takes which happens purely
within memory units can easily be simulated by \flexhash.
To complete our description of \flexhash we must describe how \flexhash handles when \tinyhash 
creates and deletes memory units, and how \flexhash handles external updates.

\paragraph{Handling Resize operations}
\tinyhash occasionally must perform \defn{resize operations} which delete or
create memory units. When \tinyhash creates a new memory unit \flexhash also
creates a new memory unit, and places it directly after the physically final
memory unit currently in its memory.

When a resize operation destroys a memory unit for \tinyhash, it is destroying
the final memory unit for \tinyhash, and so does not create a hole in
\tinyhash's memory. However, \tinyhash's final memory unit may not correspond to
the the physically final memory unit of \flexhash. So, destroying this
memory unit might cause \flexhash to have a large hole in its memory. \flexhash
fills this hole by swapping its physically final memory unit into the location
of the deleted memory unit.

Note that swapping memory units is a quite expensive operation. Fortunately it
is at most a constant-factor more expensive than Kuszmaul's resize operations
already were. Thus, the memory unit swaps only increase \tinyhash's expected
update cost by a constant-factor.

\paragraph{Handling External Updates}
Whenever an external update occurs it ``pushes'' memory to the right or left by its size.
When an external update of update-type $i$ and size $x$ occurs we update $B_i$ in the appropriate direction based on if the update pushed memory to the right or left.

First we describe how to handle external updates of items with size at least
$M/100$. When an external update of this size occurs, if it does not break the
invariant $B_i \in [0, 16M]$ we do nothing. If it does break the invariant
\flexhash must restore the invariant. \flexhash is allowed to increase or
decrease $B_i$ by \defn{rotating} a memory unit in the appropriate direction,
i.e., taking the physically final memory unit and placing it right before the
physically first memory unit. Whenever the invariant is violated \flexhash
rotates memory units until $B_i$ is restored to being within $M$ of $8M$.
The cost of doing this on such an external update is $O(1)$.

\flexhash handles the smaller external updates by performing \defn{buffer
rebuilds} whenever there is a sufficient number of external updates of some
update-type.
Let $C'$ be the largest $i$ such that update-type $i$ consists of updates of
size at most $M/100$.
In the beginning \flexhash chooses random \defn{rebuild thresholds} $R_i, R_i'
\gets (2M, 4M)$ for each update-type $i\in [C']$.
\flexhash also initializes counters $P_i, P_i'$ to $0$; these store the total
amount that memory has been ``pushed'' in either direction by external updates
of update-type $i$.
When an external update of update-type $i$ and size $x$ pushes memory to the right or left \flexhash increases $P_i$ or $P_i'$ (respectively) by $x$.
If this causes $P_i>R_i$ or $P_i'>R_i'$ \flexhash then performs a \defn{buffer-$i$ rebuild}.
Suppose the buffer-$i$ rebuild was triggered by an external update of size $x$ that pushed memory to the right; the other case (left push) is symmetric.
To perform the buffer-$i$ rebuild \flexhash rotates memory blocks (as described in the analysis of handling large external updates) to make $B_i\in [7M, 9M]$.
Then, we set $P_i\gets P_i-R_i$ (i.e., we \emph{overflow} the unused update size to count towards the next rebuild).
Then, we randomly select $R_i \gets (2M, 4M).$
Because we always set $R_i, R_i'< 4M$ and we restore $B_i \in [7M, 9M]$ on each buffer-$i$ rebuild we clearly maintain the invariant $B_i \in [0,16M]$.
It remains to analyze the expected cost per update.
Fix some update $u$ of update-type $i\in [C']$ and size $x$.
Applying \cref{lem:hittingcontinuouscats} we find that the probability of update $u$ causing a buffer-$i$ rebuild is at most $\bigO(x/M).$
Hence, the expected external update cost is at most 
$$\frac{\bigO(M)}{x}\bigO(x/M)\le \bigO(1).$$
\end{proof}

\begin{figure}
    \centering
    \includegraphics[width=.8\linewidth]{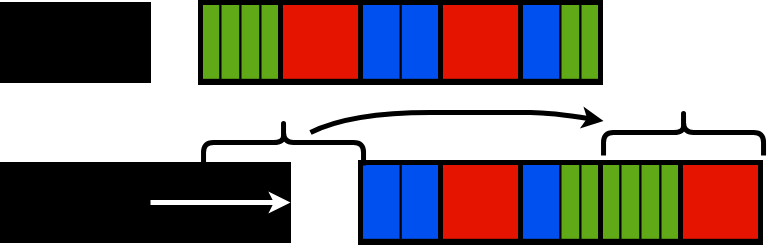}
    \caption{Because \tinyhash decomposes into interchangeable memory-units, 
    we can make \tinyhash relocatable by rotating memory-units to handle external updates.}
    \label{fig:buffers}
\end{figure}

Using the relocatable allocator \flexhash from \cref{thm:kuszmaulflex} it is easy to show:
\begin{corollary}\label{cor:onehalf}
  There is a \textbf{resizeable} allocator for arbitrary items with worst-case expected update cost $\widetilde{\bigO}(\eps^{-1/2}).$
\end{corollary}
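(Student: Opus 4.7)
The plan is to place \geo and \flexhash side-by-side in memory, with \flexhash sliding along the boundary of \geo's region as large updates cause \geo to resize. At any point in time, let $L_1$ be the total size of currently-present large items and $L_2$ the total size of tiny items. We maintain the layout where \geo owns the region $[0, L_1 + \eps/2]$ and \flexhash owns $[L_1 + \eps/2, L_1 + L_2 + \eps]$. Since $L_1 + L_2 \le 1 - \eps$, every item lies within $[0, L_1 + L_2 + \eps]$, which is precisely the resizable guarantee we need.

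Routing is the natural one: tiny updates are forwarded to \flexhash as internal updates, and large updates are forwarded to \geo. The twist is that when \geo processes a large update of size $k$, its region length changes by $k$, and so \flexhash's starting position must also shift by $k$. We implement this shift by issuing a single external update of size $k$ to \flexhash in the direction corresponding to \geo's growth or shrinkage. Since large items have size in $(\eps^4, 1]$, the size of every external update falls in the regime that \flexhash is designed to handle by \cref{thm:kuszmaulflex}, and the resizability of \geo (from \cref{lem:geocorrect}) ensures the shift is indeed exactly $k$.

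For the cost analysis: a tiny update of size $k$ simply incurs the worst-case expected internal cost $O(\log\eps^{-1})$ of \flexhash. A large update of size $k$ incurs the worst-case expected cost $\widetilde{O}(\eps^{-1/2})$ of \geo for its internal work, plus an external update of size $k$ to \flexhash whose expected cost is $O(1)$ per unit, contributing only $O(1)$ overhead. Summing, every update type has worst-case expected cost $\widetilde{O}(\eps^{-1/2})$.

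The main conceptual point to verify carefully is that the two allocators' memory regions remain disjoint and correctly positioned at all times. This reduces to checking that \geo's resizable guarantee (items confined to $[0, L_1+\eps/2]$) and \flexhash's relocatable guarantee (items confined to a contiguous region starting wherever we tell it, of length $L_2+\eps/2$) can be synchronized through the external-update interface. Because each large update triggers exactly one synchronizing external update of the matching size in the matching direction, and because the two $\eps/2$ slack budgets do not interact (each allocator maintains its own slack internally), the layout invariant is preserved throughout, finishing the proof.
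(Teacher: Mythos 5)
Your proposal matches the paper's proof essentially verbatim: the same side-by-side layout with \geo in $[0,L_1+\eps/2]$ and \flexhash in $[L_1+\eps/2,L_1+L_2+\eps]$, the same routing of tiny updates as internal updates and large updates as size-$k$ external updates to \flexhash, and the same cost accounting giving $O(1)$ expected overhead per external update. The argument is correct as stated.
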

\begin{proof}
We instantiate \geo with $\eps/2$ free space starting from $0$.
We also instantiate the relocatable \flexhash from \cref{thm:kuszmaulflex} with
$\eps/2$ free space, and we maintain the property that \flexhash starts after
\geo's memory region ends.
If there are $L_1$ total size of large items present and $L_2$ total size of
tiny items present then \geo's memory region is $[0,L_1+\eps/2]$ and \flexhash's
memory region is $[L_1+\eps/2, L_1+L_2+\eps]$.
We handle tiny items with \flexhash and large items with \geo.

 Whenever the portion of memory managed by \geo changes size by $k$ (due to an
 update of size $k$), we issue an external update of size $k$ to \flexhash in
 the appropriate direction. 
 The cost of an external update is defined precisely so that if \flexhash
 handles this external update at cost $x$ then the total size of items moved by
 \flexhash is $O(kx)$. Thus, the actual cost of this update is $O(x)$ as well.
 Hence, on any update the expected cost due to handling external updates is $O(1)$.
 The expected cost due to updates handled by \geo is at most $\widetilde{O}(\eps^{-1/2})$, and the expected cost of internal updates for \flexhash is $O(\log\eps^{-1})$.
 Thus, our allocator's expected cost is $\widetilde{O}(\eps^{-1/2})$.
\end{proof}

\section{A Lower Bound}\label{sec:short-lowerbound}
In this section we give the first non-trivial lower bound for the reallocation
problem using a surprisingly simple update sequence. 

\begin{theorem}\label{thm:ztlower}
There exist sizes $s_1,s_2 \in \Theta(\eps^{1/2})$ and an update sequence $S$
consisting solely of items of sizes $s_1,s_2$ such that any resizable allocator
(even one that knows $S$) must have amortized update cost at least
$\Omega(\log\eps^{-1})$ on $S$.
\end{theorem}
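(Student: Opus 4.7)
The plan is to choose $s_1 = \Theta(\eps^{1/2})$ and $s_2 = s_1 + \Theta(\eps)$, so that memory holds $n = \Theta(\eps^{-1/2})$ items simultaneously and the $\eps$ slack budget only permits $O(1)$ ``in-place upgrades'' of an $s_1$-item to an $s_2$-item before items have to be physically moved. A useful base observation is that \emph{resizability} already forces some movement on most deletes: when the load drops by $\Delta$, every item currently lying in the window $(L-\Delta+\eps,\, L+\eps]$ must be relocated to satisfy the new $[0, L-\Delta+\eps]$ bound. This baseline alone yields $\Omega(1)$-cost deletes but is not enough on its own to reach $\Omega(\log \eps^{-1})$.

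To get all the way to $\Omega(\log \eps^{-1})$, I would construct $S$ hierarchically in $\ell = \Theta(\log \eps^{-1})$ levels. At level $i$, memory is conceptually partitioned into $2^i$ contiguous sub-regions, and the adversary cycles through a schedule of count-targets: the number of $s_2$-items the adversary's current insert/delete pattern forces each sub-region to hold at a given time. The schedules at different levels are carefully interleaved so that no single layout trajectory (static or slowly-evolving) can simultaneously satisfy all level targets.

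The lower bound then comes from a potential/amortized-charging argument. For each level $i$ define $\Phi_i$ as the $\ell^1$-distance between the current per-sub-region $s_2$-count vector and the adversary's next level-$i$ target. The key lemma to prove is: each adversarial update causes $\sum_i \Phi_i$ to increase by $\Omega(\log \eps^{-1})$ in aggregate (one unit per level), while any allocator move of total size $s$ decreases $\sum_i \Phi_i$ by only $O(s / \eps^{1/2})$. Summing across the whole sequence and normalizing by the total update size then yields amortized cost $\Omega(\log \eps^{-1})$.

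The main obstacle is the offline nature of the adversary: because the allocator is allowed to know $S$ in advance, it cannot be defeated by pure unpredictability. It must actually be the case that every legal layout trajectory fails all level constraints simultaneously, so that $\sum_i \Phi_i$ truly accumulates across levels rather than being discharged for free by a single cleverly chosen global strategy. I expect the bulk of the technical work to lie in designing the level schedules so that each level's cycle is ``independent'' of the others in the combinatorial sense that any layout satisfying many level targets at once is extremely constrained---so constrained that transitions between adjacent such layouts demand $\Omega(\log \eps^{-1})$ total per-update movement. The restriction to just two item sizes is what lets this combinatorial, count-based potential translate into a genuine geometric movement cost, via the tight slack that forbids substituting $s_1$ for $s_2$ without shifting neighbors.
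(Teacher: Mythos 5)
Your proposal identifies the right ingredients---two sizes in $\Theta(\eps^{1/2})$ differing by $\Theta(\eps)$ so that substitution is impossible within the slack budget, and a multi-scale potential that gains one unit per level per update---but as written it is a plan rather than a proof, and the two steps you defer are exactly where the difficulty lies. First, you never construct $S$: the adversary in this problem only chooses which items to insert and delete, not where they live, so ``the adversary's insert/delete pattern forces each sub-region to hold a given number of $s_2$-items'' is not an available primitive. You would need to show that \emph{every} legal layout trajectory accumulates $\ell^1$-distance at all $\Theta(\log\eps^{-1})$ levels simultaneously against targets the (offline) allocator knows in advance; you acknowledge this is ``the bulk of the technical work,'' and it is precisely the content of the theorem. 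Second, and more concretely fatal, your claimed loss bound is wrong: moving a single item of size $\Theta(\eps^{1/2})$ across the midpoint of memory changes one coordinate of the count vector at \emph{every} level, so a move of total size $s$ can decrease $\sum_i \Phi_i$ by $\Theta(s\log(\eps^{-1})/\eps^{1/2})$, not $O(s/\eps^{1/2})$. With the corrected bound your gain and loss both carry a $\log\eps^{-1}$ factor and the argument only yields amortized cost $\Omega(1)$.

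The paper resolves both issues with a much simpler construction. The sequence is just: insert $n=\eps^{-1/2}/4$ items of size $s_1=\eps^{1/2}+2\eps$, then $n$ times delete one of them and insert an item of size $s_2=\eps^{1/2}$. The potential is $\Phi=\sum_{i=1}^n B_i/i$, where $B_i$ counts $s_2$-items among the last $i$ items; the harmonic weights $1/i$ are exactly the down-weighting of fine scales that your uniform $\ell^1$ potential lacks, so that each forced conversion at the end of memory gains $\sum_i 1/i=\Omega(\log n)$ while a rearrangement touching $x$ items loses at most $x$. The step you are missing to make the loss bound airtight is the observation that $s_1,s_2$ have no additive structure ($|\lambda_1 s_1-\lambda_2 s_2|\ge 2\eps$ for $\lambda_1,\lambda_2\in[0,n]$ not both zero), which together with resizability forces every rearrangement to decompose into full permutations of contiguous index ranges, each costing work proportional to the number of items it touches. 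I would suggest reworking your argument around a scale-weighted potential of this form; without that reweighting the hierarchical approach cannot recover the $\log\eps^{-1}$ factor.
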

\begin{proof}
Without loss of generality assume $\eps^{-1/2} \in 4\N$, and let $n =
(\eps^{-1/2})/4$. We call the items of size $s_1$ \defn{$A$'s} and the items of
size $s_2$ \defn{$B$'s}. Set $s_1 = \eps^{1/2}+2\eps$ and $s_2 = \eps^{1/2}$.
The sequence $S$ is as follows: First, insert $n$ $A$'s. Then, for $n$
iterations, delete an $A$ and insert a $B$.

Consider an allocator operating on $S$. We will think of the allocator's
experience as follows. Every \defn{step} the allocator must rearrange memory
such that it ends with an $A$. Then, that $A$ is turned into a $B$. This is
without loss of generality because a resizable allocator cannot afford to leave a gap of
size $s_1$ in memory after an $A$ is deleted. Let the ``$i$-th item'' denote the
$i$-th item counting from the end of memory.
For $i\in [n]$ let $B_i$ denote the number of $B$'s among the final $i$ items of
memory. Define potential function (which we only measure when there are $n$
items in memory, i.e., at the start of each step): $$\Phi = \sum_{i=1}^{n}
\frac{B_i}{i}.$$ Whenever an $A$ at the end of memory is turned into a $B$, each
$B_i$ increases by $1$, so $\Phi$ increases by $\sum_{i=1}^{n} 1 / i
\ge \Omega(\log n)$. 

Now we analyze how much the allocator can change $\Phi$ by performing $x$ work.
We claim that the allocator's rearrangement can be \defn{decomposed} into
``\defn{full permutations}'', operations of the form: 
pick $i,j\in [n]$ and for each $k \in [i,j]\cap \N$ assign item $k$ a
new location. Clearly the cost of such an operation is $\Omega(j-i)$.
Intuitively this decomposition is possible because $s_1,s_2$ were constructed to
have no additive structure: for any $\lambda_1, \lambda_2 \in [0,n]\cap \Z$ not both $0$ we have
$|\lambda_1 s_1 - \lambda_2 s_2| \ge 2\eps$.
Now we show how to decompose the allocator's rearrangement into full permutations.
Fix $i,j\in [n]$ such that the allocator moves item $k$ for each $k\in [i,j]$, but does not move item $k'$ for $k'\in \set{i-1,j+1}\cap [n]$.
Let $x_1$ be the location where item $j+1$ ends (set $x_1=0$ if $j=n$) and let $x_2$
be the location where item $i-1$ starts (set $x_2=1$ if $i=1$).
Suppose that there are $a$ $A$'s and $b$ $B$'s in the memory region $[x_1, x_2]$
to start, and $a'$ $A$'s and $b'$ $B$'s in this memory region after the the
rearrangement. Note that there are no items only partially in $[x_1,x_2]$ before
or after the re-arrangement due to the assumption that the items immediately on
either side of the interval (or the endpoints of memory if no such items exist)
do not move.
As argued above, if $(a,b)\neq (a',b')$
then $|(a-a') s_1 + (b-b') s_2| \ge 2\eps$. A resizable allocator is not allowed to have more than an $\eps$ gap anywhere in memory, so this would be an invalid rearrangement. Hence we must have $(a,b)=(a',b')$. And then the allocator can simply rearrange the items within items $[i,j]$ rather than taking items from outside of $[i,j]$. Thus, we can decompose any set of rearrangements into full permutations.

Now, consider the potential change caused by a full permutation that moves $x$
items. This operation only changes the $B_i$ values for $x$ items. Thus,
because $B_i/i \le 1$ for all $i$, the operation decreases $\Phi$ by at most
$x$. This operation requires at least $x/2$ work. In summary,
the allocator requires at least $x/2$ work to decrease $\Phi$ by $x$.

We have shown a sequence of $3n$ updates such that, over the course of
the whole update sequence, $\Phi$ must increase by
$\Omega(n\log n)$. Since the potential starts at $0$ and is
always at most $n$, the allocator must have amortized cost at least
 $$\frac{1}{2}\frac{\Omega(n\log n) - n}{3 n} \geq \Omega(\log n)\ge \Omega(\log\eps^{-1}).$$
\end{proof}

%%%%%%%%%%%%%%%%%%%%%%%%%%%%%%%RANDOM SECTION%%%%%%%%%%%%%%%%%%%%%
\section{An Allocator for Items with Random Sizes in $[\delta, 2 \delta]$}\label{sec:random}
In this section we consider allocators for \defn{random items}, i.e., items with uniformly random sizes in some range $[\delta, 2\delta]$.
In this setting we are able to create an allocator with
substantially better performance than the allocators of \cref{sec:geo}.

Fix $\delta = \poly(\eps)$. 
% It is helpful for 
% For intuition's sake one can  consider the example $\delta=\eps$ in mind, but we can also handle larger  values of $\delta$, e.g., $\delta=\sqrt{\eps}$, with some extra ideas.
A $\delta$-\defn{random-item sequence} is the following sequence of updates:
The first $\floor{\delta^{-1}/4}$ updates are inserts of items with sizes chosen
randomly from $[\delta, 2\delta]$. Then, the sequence alternates between a deletion of a random item and an insertion of an item with size chosen randomly from $[\delta, 2\delta]$.
Note that there will always be (within $1$ of) $\floor{\delta^{-1}/4}$ items present. 
Our main result of this section is:
\begin{theorem}\label{thm:summer}
There is a randomized resizable allocator that handles $\delta$-random-item sequences with
worst-case expected update cost  $\bigO(\log \eps^{-1})$. Furthermore, the set
of items that our allocator moves to handle an update can be computed in
expected time $\bigO(\eps^{-1/2})$.
\end{theorem}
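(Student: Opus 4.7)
The plan is to exploit the randomness of item sizes to maintain items in a near-compact layout where each deletion can be absorbed by rearranging only a small subset of nearby items. The central enabler is the fact that for a random set of items with sizes in $[\delta,2\delta]$, subset sums are very densely distributed, so one can always find a small subset whose sum closely matches any target value.

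Algorithmically, I propose the following. Items are stored contiguously in some canonical order, and insertions are appended at the right end of memory at zero cost. When an item $I$ of size $x$ at position $p$ is deleted, the allocator inspects a window $W$ of $k=\Theta(\eps^{-1/2})$ items near $p$ and runs a subset-sum subroutine to compute a subset $S\subseteq W$ of size $O(\log\eps^{-1})$ whose net effect---when swapped into the region around the gap---closes the gap up to an error of $\eps' = \poly(\eps)\cdot\delta$. The item-moves to effect this swap cost $O(\log\eps^{-1})$ times the deleted item's size. The error $\eps'$ accumulates as waste, and once the waste reaches $\Theta(\eps)$ the allocator triggers a rebuild whose cost is absorbed into $O(\log\eps^{-1})$-per-update in expectation by randomizing the rebuild threshold, in the style of \cref{thm:geo}.

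The cost calculation comes out to $O(\log\eps^{-1})$ per deletion provided one has a subset-sum guarantee of the form: given $k=\Theta(\eps^{-1/2})$ items with i.i.d.\ uniform sizes in $[\delta,2\delta]$ and a suitable target $t$, one can find a subset $S$ with $|S|=O(\log\eps^{-1})$ and sum within $\eps'$ of $t$ in expected running time $O(\eps^{-1/2})$. Existence of such a subset follows from subset-sum density: with $|S|=\Theta(\log\eps^{-1})$ and exponentially many candidate subsets spanning an interval of length $O(\delta\log\eps^{-1})$, the achievable sums are spaced roughly $\eps'$ apart.

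The main obstacle is the fast algorithm. Naive enumeration of $O(\log\eps^{-1})$-sized subsets is super-polynomial, meet-in-the-middle on a size-$k$ window costs $2^{\Theta(\eps^{-1/2})}$, and uniform random sampling of a subset hits a given $\eps'$-window with probability only $\Theta(\eps'/\delta)$, requiring too many samples. I expect the proof to build the subset in two phases: first, a coarse greedy or random assignment that positions the sum within $O(\delta)$ of $t$; then, a fine-tuning phase that repeatedly swaps individual items in and out to drive the sum to within $\eps'$. Controlling subset size, accuracy, and runtime simultaneously---this is what the theorem alludes to as a technically interesting lemma about subset sums of random sets---is the crux of the argument and where I would expect most of the work of the proof to lie.
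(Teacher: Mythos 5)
Your high-level strategy (absorb each deletion by swapping in a small subset of items whose sum nearly matches a target, accumulate the small mismatches as waste, and amortize periodic rebuilds against a randomized threshold) matches the paper's. But you leave the crux unresolved, and the resolution is not the two-phase fine-tuning you sketch. The paper's key move is to make the candidate set \emph{small}: items are grouped into blocks of only $m=\Theta(\log\eps^{-1})$ items, and the subset-sum search is run on a single block. A block of $m$ i.i.d.\ sizes has $\binom{m}{m/2}\approx 2^m/\sqrt m\approx \eps^{-1}/\sqrt{\log\eps^{-1}}$ half-size subsets, which is \emph{just barely} enough to hit a target window of width $g=\eps\delta\log\eps^{-1}$ with constant probability; proving this requires a second-moment argument (the paper's \cref{thm:randomthm}, with the exact constant in the block size mattering because it sits in an exponent). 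With $m=\Theta(\log\eps^{-1})$, meet-in-the-middle costs $2^{m/2}=O(\eps^{-1/2})$, which is exactly the claimed running time --- no fine-tuning phase is needed, and failures are handled by simply retrying on the next block, with $O(1)$ expected retries. Your window of $k=\Theta(\eps^{-1/2})$ items is far larger than necessary for existence, and it is precisely what makes the algorithmic problem look hard; your proposed greedy/swap fine-tuning phase is unanalyzed and it is not clear it converges to precision $\poly(\eps)\cdot\delta$ using single-item swaps whose increments are themselves $\Theta(\delta)$-scale.

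The second, independent gap is that you never address how the randomness of item sizes survives repeated queries. Once the allocator has \emph{examined} a window and conditioned on the outcome of a subset-sum search, the sizes in that window are no longer distributed as fresh i.i.d.\ uniforms, so the density guarantee cannot be reused on them. The paper devotes substantial machinery to this: blocks are marked valid or invalid, a block is invalidated the moment it is inspected (whether or not the search succeeds), only never-inspected blocks are queried (the ``purity of valid blocks'' property), and a rebuild randomly permutes all items to restore freshness once too few valid blocks remain. Without some analogue of this bookkeeping, the claim that each deletion finds a matching subset with constant probability is unjustified after the first few deletions. Relatedly, your proposal does not explain how memory stays contiguous after the swap (the paper pushes invalidated suffix blocks into a ``trash can'' separated by a bounded buffer), nor how the resizable guarantee is maintained when $\delta>\eps/4$.
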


Note that in this stochastic setting where the total size of items present is variable 
the resizable guarantee of our allocator is the most natural property to hope for.
To prove \cref{thm:summer}, the following property of $\delta$-random-item sequences is quite useful:
After $d\ge \floor{\delta^{-1}/4}$ updates the distribution of
items sizes present is the same distribution as obtained by sampling
$\floor{\delta^{-1}/4}$ (or $\floor{\delta^{-1}/4}+1$ depending on the parity of $d$)
values independently from $[\delta, 2\delta]$.
% This property is quite powerful. However, great care is still required:
% although the distribution at any point in time is the 
% same as this uniform distribution there is
% very large correlation between the items present at any two nearby times. 

Our allocator for random items is based on the observation that
random independent values can make many subset sums. 
The subset sums of random sets have been studied before (see, e.g., \cite{Lu98}).
However, to the best of our knowledge previous work has only given an
asymptotic version of the result we need, namely \cref{thm:randomthm}.
Our self-contained analysis explicitly determines the constant-factor for how
large a random set has to be in order to contain a subset of a desired sum with constant probability.
This is important for our application because the constant-factor
appears as an exponent in the running time of our algorithm.

In what follows our analysis is asymptotic in a parameter $n\in\N$ (rather than
in $\eps^{-1}$ like in all other places in the paper).
First we need a standard fact about sums of random variables. 
We show in \cref{sec:appendixobvious} how to derive this fact from a theorem in \cite{Uspensky1937}.
\begin{fact}\label{fact:ERFC}
Fix constants $a,b>0$.
  Let $x_1,\ldots, x_n \gets [0,1]$ be chosen uniformly
  randomly and independently. Then
  $$\Pr\left[\sum_{i=1}^{n} x_i \in [n/2-a,n/2+b] \right] =
  \Theta(1/\sqrt{n}).$$
\end{fact}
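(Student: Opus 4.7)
The plan is to recognize this as a quantitative central limit theorem statement and apply a version of the CLT with an explicit error term sharper than the leading Gaussian probability. Since each $x_i$ has mean $1/2$ and variance $1/12$, the sum $S_n = \sum_{i=1}^n x_i$ has mean $\mu = n/2$ and variance $\sigma^2 = n/12$. For constant $a,b$, the Gaussian density at the mean contributes an interval probability of approximately $(a+b) \cdot \phi_\sigma(0) = (a+b)/\sqrt{2\pi \sigma^2} = \Theta(1/\sqrt{n})$, which is the target order.

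The key issue is making the $\Theta$ rigorous in both directions: we need an error bound in the CLT that is strictly smaller than $1/\sqrt{n}$. I would invoke the version of the central limit theorem in Uspensky 1937 that gives an Edgeworth-style expansion
\[
\Pr[S_n \le t] = \Phi\!\left(\frac{t-\mu}{\sigma}\right) + \frac{c_3(t)}{\sqrt{n}} + R(t,n),
\]
where $c_3$ is a correction term proportional to the third central moment of $x_i$, and $R(t,n) = O(1/n)$. The crucial observation is that $x_i$ is \emph{symmetric} about $1/2$, so its third central moment vanishes, making $c_3 \equiv 0$. Hence Uspensky's theorem yields the improved approximation
\[
\Pr[S_n \le t] = \Phi\!\left(\frac{t-\mu}{\sigma}\right) + O(1/n),
\]
uniformly in $t$.

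Applying this to the endpoints $t = n/2 + b$ and $t = n/2 - a$ and subtracting gives
\[
\Pr\!\left[S_n \in [n/2 - a,\, n/2 + b]\right] = \Phi(b/\sigma) - \Phi(-a/\sigma) + O(1/n).
\]
Since $\sigma = \Theta(\sqrt{n})$ and $a,b$ are constants, the Gaussian difference equals $\int_{-a/\sigma}^{b/\sigma} \phi(u)\,du = (a+b)\phi(0)/\sigma + O(1/n)$ by a Taylor expansion of the standard normal density around $0$, which is $\Theta(1/\sqrt{n})$. Combining with the $O(1/n)$ error term (which is asymptotically negligible), we obtain both the matching upper and lower bounds of $\Theta(1/\sqrt{n})$.

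The main obstacle is locating and correctly citing the precise theorem in Uspensky's text and verifying that its hypotheses (boundedness and sufficient moment conditions on $x_i$) are met by the uniform distribution. The symmetry of the uniform law is what makes this clean: without it, we would only get a $O(1/\sqrt{n})$ error term, which is the same order as the leading behavior, and would not by itself give a matching lower bound.
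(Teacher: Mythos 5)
Your proof is correct and follows essentially the same route as the paper: both invoke Uspensky's 1937 central limit theorem for uniform variables with an $O(1/n)$ error term (which is negligible against the leading $\Theta(1/\sqrt{n})$ Gaussian contribution over a constant-width window at the mean). The paper cites a form of the theorem already specialized to symmetric intervals and then splits $[-a,0]\cup[0,b]$ by symmetry, whereas you subtract CDF values after noting the vanishing third moment kills the first Edgeworth correction; these are cosmetically different but the same argument.
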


We will also need the following asymptotic expression for
binomial coefficients (see, e.g., \cite{zhao2023graph}):
\begin{fact}\label{fact:entropy}
  Define the binary entropy function $H$ as\\
  $H(x) = -x\log x -(1-x)\log (1-x)$.
   For any constant $\alpha\in (0,1),$ 
  $$\binom{n}{\ceil{\alpha n}} =  \Theta\left(2^{nH(\alpha)}/\sqrt{n}\right).$$
\end{fact}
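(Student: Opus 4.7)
The plan is to derive the estimate directly from Stirling's approximation, treating the ceiling as a trivial perturbation. First I would set $k = \ceil{\alpha n}$ and note that, since $\alpha \in (0,1)$ is a fixed constant, $k = \Theta(n)$ and $n - k = \Theta(n)$. Stirling's formula $m! = \sqrt{2\pi m}(m/e)^m(1 + O(1/m))$, applied to the three factorials in $\binom{n}{k} = n!/(k!(n-k)!)$ and cancelling the factors of $e^n$, yields
$$\binom{n}{k} = \Theta\paren{\sqrt{\frac{n}{k(n-k)}} \cdot \frac{n^n}{k^k(n-k)^{n-k}}}.$$
Since $k(n-k) = \Theta(n^2)$, the square-root prefactor contributes $\Theta(1/\sqrt{n})$, matching the $1/\sqrt{n}$ appearing in the claim.

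Next I would evaluate the remaining ratio under the idealized assumption that $\alpha n$ is an integer, so $k = \alpha n$ exactly. In that case a direct computation gives
$$\frac{n^n}{k^k(n-k)^{n-k}} = \alpha^{-\alpha n}(1-\alpha)^{-(1-\alpha)n} = 2^{-n(\alpha \log \alpha + (1-\alpha)\log(1-\alpha))} = 2^{nH(\alpha)},$$
using the convention $\log = \log_2$ fixed in \cref{sec:prelim}. Combining this with the $\Theta(1/\sqrt{n})$ prefactor establishes the desired $\Theta(2^{nH(\alpha)}/\sqrt{n})$ bound whenever $\alpha n$ is integral.

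Finally, I would handle the ceiling by a short perturbation argument. Writing $\ceil{\alpha n} = \floor{\alpha n} + O(1)$, the ratio $\binom{n}{\ceil{\alpha n}}/\binom{n}{\floor{\alpha n}}$ is a product of $O(1)$ factors, each of the form $(n-j)/(j+1)$ with $j = \Theta(n)$, and is therefore $\Theta(1)$. Hence the ceiling operation changes the binomial by only a constant factor, so the estimate above carries over. The only real obstacle is purely bookkeeping: one must verify that Stirling's multiplicative error $1 + O(1/m)$ is absorbed into the $\Theta$ notation, and note that the hidden constants may depend on $\alpha$, which is fine since $\alpha$ is fixed in the statement.
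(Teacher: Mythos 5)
Your Stirling-based derivation is correct, and it is worth noting that the paper does not prove this statement at all: it is recorded as a Fact with a citation to a textbook, so there is no in-paper argument to compare against, and your proposal simply supplies the standard self-contained justification. One step is worth tightening. Your perturbation argument compares $\binom{n}{\ceil{\alpha n}}$ to $\binom{n}{\floor{\alpha n}}$, but when $\alpha n$ is not an integer neither of these equals the quantity your idealized computation evaluated (the formula at $k=\alpha n$), so the ratio-of-consecutive-binomials observation does not by itself transfer the estimate to the general case. The clean repair is to run the Stirling computation directly with $k=\ceil{\alpha n}$, which yields $\binom{n}{k}=\Theta\bigl(2^{nH(k/n)}/\sqrt{n}\bigr)$, and then note that $|k/n-\alpha|\le 1/n$ while $H'$ is bounded on a fixed compact neighbourhood of $\alpha$ (possible since $\alpha$ is a constant in $(0,1)$), so $n\,|H(k/n)-H(\alpha)|=O(1)$ and hence $2^{nH(k/n)}=\Theta(2^{nH(\alpha)})$. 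With that one-line addition the proof is complete; as you observe, the hidden constants depend on $\alpha$, which is consistent with how the fact is invoked in the paper (only at the fixed values $\lambda/2$ and $(1-\lambda)/2$ with $\lambda=4/5$).
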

We establish the following theorem:
\begin{theorem}\label{thm:randomthm}
  Let $m= 2\lceil (\log n)/2\rceil$.
  Fix arbitrary $y\in (3/4)m + [-1,1]$. 
  Let $x_1,\ldots, x_m  \gets
  [1,2]$ be uniformly random and independent values.
  Then, with probability $\Omega(1)$ there exists an $(m/2)$-element subset of
  $x_1,\ldots, x_m$ with sum in $[y-\frac{\log n}{n},y]$.
\end{theorem}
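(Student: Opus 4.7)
I would attack this with the second moment method applied to the random variable
\[
N \;=\; \#\bigl\{\,S \subseteq [m] : |S| = m/2 \text{ and } \textstyle\sum_{i \in S} x_i \in [y - \tfrac{\log n}{n},\, y]\,\bigr\}.
\]
The target interval has length $|I| = \log n/n$ and sits at the mean $3m/4$ of an $(m/2)$-term sum. Since $m = 2\lceil(\log n)/2\rceil$ gives $2^m = \Theta(n)$, we have $\binom{m}{m/2} = \Theta(2^m/\sqrt{m}) = \Theta(n/\sqrt{\log n})$. The plan is to show $\E[N] = \Theta(1)$ and $\E[N^2] = O(\E[N]^2)$, so Paley--Zygmund delivers $\Pr[N \ge 1] \ge \E[N]^2/\E[N^2] = \Omega(1)$.

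For $\E[N]$, fix any $(m/2)$-subset $S$. The sum is an Irwin--Hall-type sum with mean $3m/4$ and variance $m/24$; by the local CLT underlying \cref{fact:ERFC}, the density at $y$ (which lies within $O(1)$ of the mean) is $\Theta(1/\sqrt{m})$ and is Lipschitz on scale $1$, so $\Pr[\mathrm{Sum}(S) \in I] = \Theta(|I|/\sqrt{m}) = \Theta(\log n / (n\sqrt{\log n}))$. Multiplying by $\binom{m}{m/2}$ gives $\E[N] = \Theta(1)$.

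For $\E[N^2]$, I would group pairs $(S,T)$ by $k = |S \cap T|$. The number of pairs with intersection $k$ is $\binom{m}{m/2}\binom{m/2}{k}^2$. Decompose $\mathrm{Sum}(S) = A + C$ and $\mathrm{Sum}(T) = B + C$ where $A,B,C$ are independent sums of $m/2 - k,\, m/2 - k,\, k$ uniforms. A bivariate local CLT (obtained by convolving the univariate density estimates of $A,B,C$) shows $\bigl(\mathrm{Sum}(S),\mathrm{Sum}(T)\bigr)$ has joint density $\Theta\!\bigl(\frac{1}{m\sqrt{1-\rho^2}}\bigr)$ at $(y,y)$ for correlation $\rho = 2k/m$, so
\[
\Pr[\mathrm{Sum}(S) \in I \wedge \mathrm{Sum}(T) \in I] = \Theta\!\left(\frac{|I|^2}{m\sqrt{1-\rho^2}}\right) \quad \text{for } k < m/2.
\]
Setting $\alpha = 2k/m$ and invoking \cref{fact:entropy} to write $\binom{m/2}{k}^2 = \Theta(2^{mH(\alpha)}/m)$, the contribution of intersection-size $k$ to $\E[N^2]/\E[N]^2$ simplifies to $\Theta\!\bigl(2^{m(H(\alpha)-1)}/(\sqrt{m}\sqrt{1-\alpha^2})\bigr)$. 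Since $H(\alpha) \le 1$ with a quadratic maximum at $\alpha = 1/2$, the factor $2^{m(H(\alpha)-1)}$ is a Gaussian bump of width $\Theta(1/\sqrt{m})$ in $\alpha$ (equivalently, $\Theta(\sqrt{m})$ integer values of $k$) with peak height $\Theta(1)$; summing gives $\Theta(\sqrt{m}) \cdot \Theta(1/\sqrt{m}) = \Theta(1)$. Tails where $|\alpha - 1/2| \gg 1/\sqrt{m}$ are polynomially small in $n$, and the diagonal $k = m/2$ contributes just $\E[N]/\E[N]^2 = \Theta(1)$. So $\E[N^2] = \Theta(\E[N]^2) = \Theta(1)$.

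The main obstacle is the bivariate local CLT at resolution $|I| = \log n / n$: \cref{fact:ERFC} is only stated for constant-width windows, so I would need to upgrade it to a pointwise density estimate (by sandwiching between unit-length windows and using monotonicity/Lipschitz continuity of the Irwin--Hall density near its mean), and then convolve to handle the shared-$C$ decomposition. A secondary issue is that \cref{fact:entropy} is only clean for $\alpha$ bounded away from $\{0,1\}$; in the regimes $k = O(1)$ or $m/2 - k = O(1)$ (equivalently $\alpha$ near $0$ or $1$), crude bounds on $\binom{m/2}{k}^2$ and on $p_k$ easily show the contribution is $O(1/\binom{m}{m/2}) = o(1)$, so these boundary regimes do not affect the argument.
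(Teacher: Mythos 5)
Your proposal is correct in outline and shares the paper's skeleton exactly: both apply the second moment method to the count of good $(m/2)$-subsets, and your $\E[N]=\Theta(1)$ computation is the same as \cref{lem:ESK} (the paper only needs the lower bound, obtained by freezing all but the last summand rather than by a pointwise density estimate). Where you genuinely diverge is in the treatment of $\E[N^2]$. The paper never proves any local CLT at resolution $\log n/n$, let alone a bivariate one. Instead it splits the off-diagonal pairs at the single threshold $|A\cap B|=\lambda m/2$ with $\lambda=4/5$: for small overlap it applies the univariate conditional upper bound of \cref{lem:independence_is_real} to $S_A$ and $S_B$ separately (conditioning on the shared coordinates) and pays the trivial count $\binom{m}{m/2}^2$, giving $O(1)=O(\E[S]^2)$; for large overlap it uses only the crude joint bound $(\log n/n)^2$ (two distinct private coordinates each forced into a length-$(\log n/n)$ interval) and wins from the entropy count via \cref{fact:entropy}, checking numerically that $H(\lambda/2)+2H((1-\lambda)/2)-2<0$ at $\lambda=4/5$. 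Your route instead computes sharp asymptotics for every intersection size $k$ via a bivariate local CLT and sums the resulting Gaussian bump in $\alpha=2k/m$. The trade-off: the paper's argument needs nothing beyond the constant-window \cref{fact:ERFC} plus one conditioning step, at the cost of an unilluminating numerical check and a lossy bound; yours is conceptually cleaner and yields the stronger conclusion $\E[N^2]=\Theta(\E[N]^2)$ uniformly, but the deferred density estimates (pointwise Irwin--Hall density at scale $\log n/n$ and its convolution for the shared-$C$ decomposition) are real work --- though the same ``freeze all but the last private summand'' trick used in \cref{lem:ESK,lem:independence_is_real} would supply them, so there is no fundamental obstacle. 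You also correctly flag the degeneration of $\sqrt{1-\rho^2}$ near $k=m/2$ and fall back to a crude bound there, which is in effect the paper's $T_1$/$T_3$ treatment.
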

\begin{proof}
Let $\I_y = [y-\frac{\log n}{n}, y]$.
Let random variable $S$ denote the number of $(m/2)$-element
subsets of $x_1,\ldots, x_m$ with sum in $\I_y$. 
\begin{lemma}\label{lem:ESK} 
$\E[S]\ge \Omega(1).$
\end{lemma}
\begin{proof}
Let $z_1, z_2, \ldots, z_{m/2}$ be sampled uniformly from $[1,2]$.
Define random variable $Z = \sum_{i=1}^{m/2-1}z_i$.
Let $Z^*$ denote the event $Z\in [y-2, y-1-\frac{\log n}{n}]$.
Then,
$$\Pr\left[Z+z_{m/2} \in \I_y \right] \ge \Pr[Z^*] \cdot \Pr[Z+z_{m/2}\in\I_y \mid Z^*].$$
Bounding the probability in this manner is productive because $Z^*$ is very
likely, and conditional on $Z^*$ the event $Z+z_{m/2}\in \I_y$ is easy to analyze.
In particular, 
$$\E[Z] = (m/2-1)\cdot (3/2) \in [y-3, y+3].$$
Thus \cref{fact:ERFC} implies that $\Pr[Z^*] = \Theta(1/\sqrt{m})$.
If $Z^*$ occurs, making the $(m/2-1)$-th partial sum very close to the desired
value, then with probability $\frac{\log n}{n}$ the value of
$z_{m/2}$ makes $Z+z_{m/2}\in\I_y.$
So we have found
\begin{equation}
\Pr[Z+z_{m/2}\in \I_y] \ge \Omega\left(\frac{\log n}{n \sqrt{m}}\right)\ge \Omega\left(\frac{\sqrt{\log n}}{n}\right). \label{eq:ZzIy}
\end{equation}
Now we use \cref{eq:ZzIy} to show $\E[S]$ is large.
  Using linearity of expectation over all $\binom{m}{m/2}$ possible
   $(m/2)$-element subsets of the $x_i$'s we conclude:
   $$\E[S] \ge \Omega\left(\frac{\sqrt{\log n}}{n}\right) \cdot\binom{m}{m/2} \ge \Omega\left(\frac{\sqrt{\log n}}{n}  \cdot \frac{2^{\log n}}{\sqrt{\log n}}\right) \ge \Omega(1).$$
\end{proof}

Let $A_1$ denote a uniformly random value from $[1,2]$ and for
each $i\in \N$ let $A_{i+1}$ denote $A_i$ plus another random
independent value drawn from $[1,2]$.
\begin{lemma}\label{lem:independence_is_real}
  For any constant $\lambda \in (0,1)$, any $i\in \N$ with $i\le \lambda m/2$ and
  any $a\in \R$ we have
  $$\Pr[A_{m/2}\in \I_y \mid A_i=a] \le
  O\left(\frac{\sqrt{\log n}}{n}\right) .$$
\end{lemma}
\begin{proof}
  By \cref{fact:ERFC} we have that for any value of $A_i$ there
  is at most a $\bigO(1/\sqrt{m/2-i})\le O(1/\sqrt{m})$ chance that $A_{m/2-1}$ sums
  to within $2$ of $y$. Conditional on $A_{m/2-1}$ being this close
  to $y$ there is at most a $\frac{\log n}{n}$ chance that the value added to $A_{m/2-1}$ to make $A_{m/2}$ makes the sum $A_{m/2}$ precisely lie in the interval $\I_y$.
  Multiplying these probabilities yields the desired bound.
\end{proof}

We now proceed with the proof of the theorem.
We will use the second moment method (\cite{alon2016probabilistic}) to show that $\Pr[S>0] \ge \Omega(1).$

Let $\mathcal{X}$ denote the set of all size-$m/2$ subsets of $[m]$.
  For $A\in \mathcal{X}$ let indicator variable $S_A\in
  \set{0,1}$ indicate the event that $\sum_{i\in A} x_i \in \I_y.$ 
  Of course 
  $S = \sum_{A\in\mathcal{X}} S_A.$
  Let $\lambda = 4/5$.
  We decompose $\E[S^2]$ as:
  \begin{multline}\label{eq:esk2}
  \E[S^{2}] = \sum_{\substack{A,B\in \mathcal{X}^2\\ A=B}}\Pr[S_A\land S_B] +
  \sum_{\substack{A,B\in \mathcal{X}^2\\ |A\cap B|<\lambda m/2}}
  \Pr[S_A\land S_B]\\
  + \sum_{\substack{A,B\in \mathcal{X}^2\\ \lambda m/2\le |A\cap B|<m/2}} \Pr[S_A\land S_B].
  \end{multline}
  Let $T_1,T_2,T_3$ denote the three terms in \cref{eq:esk2} in the order they appear. 
  $T_1$ is simply $\E[S]$. 
  Recall that \cref{lem:ESK} says $\E[S]\ge \Omega(1)$. 
Thus,  
\begin{equation}\label{eq:term1}
T_1=\E[S]\le \bigO(\E[S]^{2}).
\end{equation}

We can bound the probability in the sum defining $T_2$ using \cref{lem:independence_is_real}.
In particular, observe that $A\cap B$ is a sufficiently small set, so if we
condition on $\sum_{i\in A\cap B} x_i$ the conditional probabilities of $S_A, S_B$ are at most $O((\sqrt{\log n})/n)$. The number of terms in the sum defining $T_2$ is trivially at most $|\mathcal{X}|^2$. Thus, we have
\begin{equation}\label{eq:term2}
  T_2 \le O\left(\frac{\log n}{n^2}\right) \cdot \binom{m}{m/2}^2 \le \bigO(1).
\end{equation}

The probabilities of $S_A,S_B$ in the sum defining $T_3$ might be highly
correlated so we cannot use the strong bound that we used when bounding $T_2$.
Fortunately, for $A\neq B$ in order for both events $S_A,S_B$ to occur we need
two distinct random values to land in specific intervals of size
$\frac{\log n}{n}$. Specifically if $A\neq B$ then we can find $i_B\in
B\setminus A$ and $i_A\in B\setminus A$. Then, after conditioning on the value
of $x_i$ for each $i\in [m]\setminus\set{a,b}$ the probability that
$S_A$ and $S_B$ both occur is at most $\frac{\log^2 n}{n^2}.$
Fortunately the number of terms in the sum defining $T_3$ 
is not too large: it is at most
$$\binom{m}{\ceil{m\lambda/2}}\binom{m}{m/2-\ceil{m\lambda/2}}^2,$$
because we can first chose $A\cap B$ and then chose $A\setminus B, B\setminus A$.
Thus,
\[
  T_3\le \frac{\log^2 n}{n^2}\binom{m}{\ceil{\lambda m/2}} \binom{m}{m/2-\ceil{m\lambda/2}}^2.
\]
Now we show $T_3<o(1)$. Using \cref{fact:entropy} we have
\begin{equation*}\label{eq:gross11}
  T_3 \le O\left( \frac{\log^2 n}{n^2} 2^{mH(\lambda/2)}2^{2mH((1-\lambda)/2)}
\frac{1}{\left(\sqrt{\log n}\right)^{3} }\right).
\end{equation*}
Thus
\begin{equation*}
  \log T_3\le
  \bigO(1)+\log\log n + \left(H(\lambda/2)+2H\left(\frac{1-\lambda}{2}\right)-2\right)\cdot \log n.\label{eq:letsmakeitnegative}
\end{equation*}
Evaluating the expression with $\lambda = 4/5$ we find  $\log T_3< -\Omega(\log n)$. 
Thus $T_3<o(1)$ as desired.

Now we combine our bounds on $T_1,T_2,T_3$ to obtain, via the second moment method (see chapter 4 of \cite{alon2016probabilistic}), the bound
\[
\Pr[S>0] \ge \frac{\E[S]^2}{\E[S^2]} = \frac{\E[S]^2}{T_1+T_2+T_3} \ge \Omega(1).
\]

\end{proof}

We are now equipped to prove \cref{thm:summer}.
\begin{proof}[Proof of \cref{thm:summer}]
We call our allocator \trash.
We start by giving a construction that works if $\delta\le \eps/4$. At the end
of the proof we show how to modify this construction to work in the case
$\delta >\eps/4$ as well. \trash reserves $\eps/2$ free space for use as a
\defn{buffer}, which will separate the \defn{main-body} of memory and the
\defn{trash can}: a suffix of the used portion of memory. The trash can, buffer and main-body all
start empty. It is important that the buffer is at least the size of the
largest item, i.e., $\eps/2\ge 2\delta$. If this is not the case we will need
a more involved construction for the buffer; we discuss this at the end of the
proof. \trash reserves the remaining $\eps/2$ free space to enable \trash to create waste by
introducing up to $\delta^{-1}/(2\log\eps^{-1})$ gaps of size up to $g=
\eps \delta \log \eps^{-1}$ in memory.

\trash operates somewhat similarly to the \geo allocator of \cref{sec:geo} in
that \trash handles deletes by performing \defn{swaps} that introduce small
amounts of waste in memory, and periodically \defn{rebuilds} memory to eliminate this waste. 
The main difference between \trash and \geo is that \trash swaps
\emph{sets} of items rather than single items. This gives \trash much greater
flexibility, resulting in its substantially lower cost.

\trash groups the items in the main-body into \defn{blocks} of $m=2\ceil{ (\log
\eps^{-1})/2 }$ items; the items in the trash can are not part of blocks. 
Blocks will be the basic units that facilitate \trash's swap operations.
Blocks are marked as either \defn{valid} or \defn{invalid}.

\paragraph{Handling Deletes}
Suppose an item $I$ is deleted. 
\trash forms a set $Y$ containing $I$ and roughly $m/2-1$ other nearby items,
with total size $y \in \frac{3}{4}m\delta + [-\delta, \delta]$.
In particular, if $I$ is in the main-body \trash arbitrarily adds items
contiguous with $I$ from the same block to $Y$ until the total size of $Y$ lies
in $\frac{3}{4}m\delta+[-\delta, \delta]$.
If $I$ is in the trash can \trash simply adds arbitrary trash can items contiguous with
$I$ to $Y$ until the total size is appropriate. 
Constructing $Y$ is possible because items have size at most $2\delta$.

\trash then attempts to find a block $B$ near the end of the main-body with a
subset of elements $S$ whose sum $z$ is in the interval $[y-g, y]$.
We say that such a block is \defn{compatible} with $Y$.
To find a compatible block \trash\xspace \defn{checks} whether the final valid block in the
main-body is compatible with $Y$. If it is not \trash invalidates this block and
keeps trying valid blocks. 
If the number of valid blocks ever becomes too small \trash will abandon its
search for a compatible block and instead handle the delete via a \emph{rebuild
operation}, which will be described later. 
So we may assume \trash finds a valid block $B$ with corresponding subset $S$ of sum
$z\in [y-g, y]$.

\trash now swaps $S,Y$. 
To swap $S,Y$, \trash first takes items $S$ and
arranges them contiguously in the region of memory where items $Y$ used to be,
leaving a gap of size at most $g$. \trash then takes items $Y\setminus \set{I}$
and items $B\setminus S$ and arranges them contiguously in the region of memory
that was occupied by block $B$. We remark that if $I$ is part of block $B$ 
the above steps do nothing. \trash then removes $I$ from memory.
Once a block of items has been used for a swap \trash marks the block as
invalidated. In particular, both the block $B$ used to repair the delete
and the block where the delete occurred (if $I$ was in the main body) are invalidated.

To finish the swap \trash\xspace \defn{pushes} some blocks into the trash can.
Recall that the trash can is a suffix of the utilized portion of memory,
separated from the main-body by a small buffer. Once an item $I_0$'s block has
been invalidated \trash may place $I_0$ in the trash can.  However, invalidated
blocks need not be immediately placed in the trash can. 
When a swap happens, taking items $S$ from block $B$ to repair a delete, \trash
takes block $B$ and all blocks to its right in the main-body and moves them to
be contiguous with the start of the trash can, and compacts them against the
start of the trash can. At this point \trash no-longer considers items from
these pushed blocks to be part of any blocks. 

When \trash pushes blocks into the trash can it will potentially increase the
size of the buffer (i.e., the distance between the trash can and the main body)
due to the empty space created by removing item $I$ from memory.
If the buffer size now exceeds $\eps/2$ \trash takes items from the end of the
trash can and rotates them to be flush with the beginning until the buffer size
is again at most $\eps/2$. 

\paragraph{Handling Inserts}
\trash handles inserts by placing the inserted item after the final item currently in memory and adding the inserted item to the trash can.

\paragraph{Performing Rebuilds}
In addition to responding to deletes and inserts as described above \trash
occasionally must perform expensive \defn{rebuild operations} that ensure
necessary guarantees on the layout of items in memory.

In the beginning \trash uniformly randomly samples a \defn{rebuild threshold} ${r\gets
(\delta^{-1}/(8m), \delta^{-1}/(6m))\cap \N}$. 
This counts as a ``free rebuild'' (as a sentinel value).
If an update would cause the number of valid blocks to drop below $r$,
instead of handling it normally \trash randomly permutes all items, places them
contiguously into memory to eliminate all waste, and then logically partitions
the items into blocks of $m$ contiguous items, starting from the right of
memory. \trash then resamples $r$.

\begin{figure}
    \centering
    \includegraphics[width=1\linewidth]{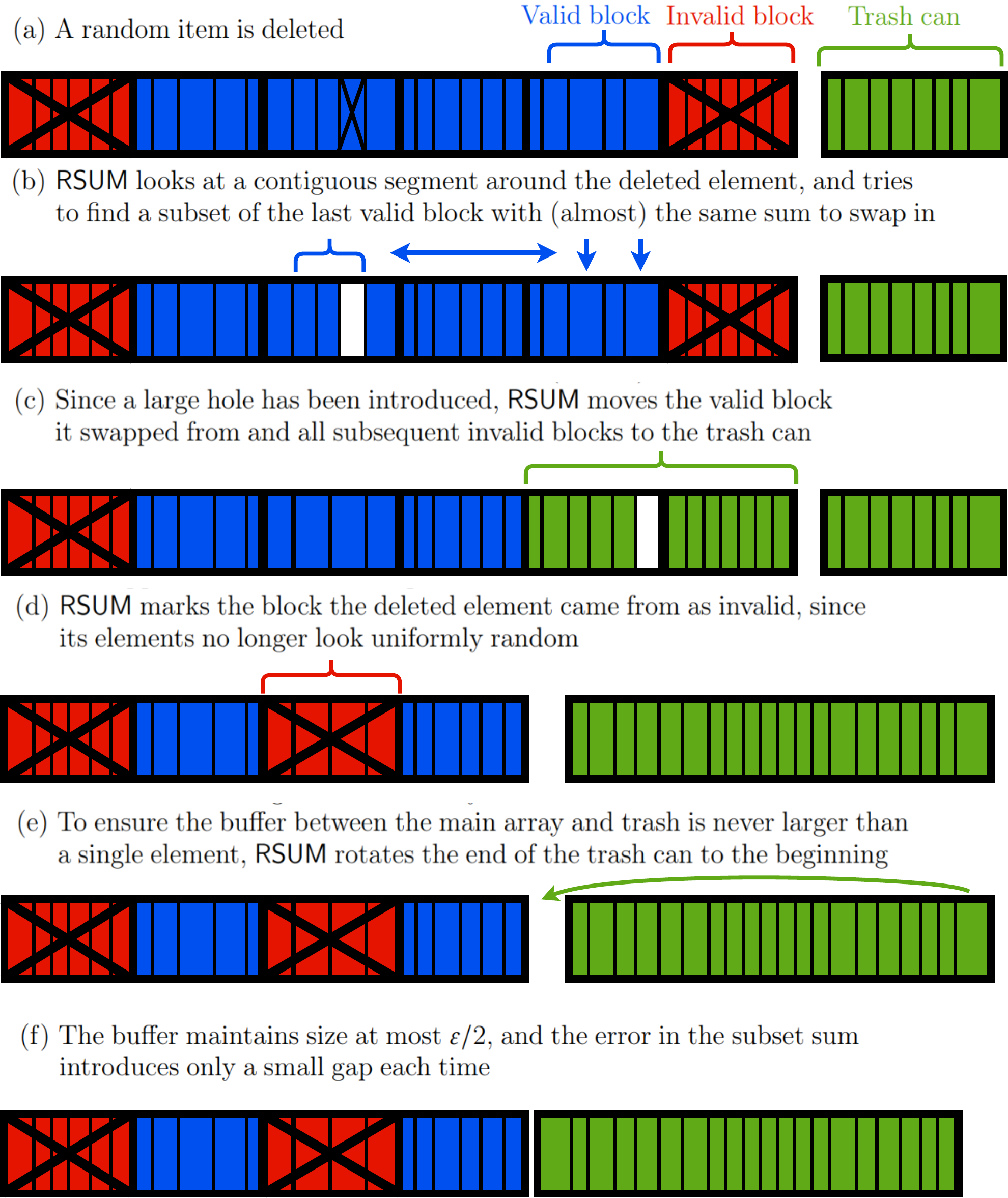}
    \caption{Operation of \trash. Since valid blocks have never been modified, their elements will look uniformly random -- by our analysis, we know they're large enough that one of the last couple valid blocks has a subset sum with size very close to the neighbourhood of the deleted element. So, \trash swaps that subset in and then pushes a suffix of memory into the trash can to eliminate the hole.}
\end{figure}

  \begin{algorithm}
        \caption{Random-Item Sequence: \trash \\ Assume $\delta < \eps/4$}\label{alg:random}
        \begin{algorithmic}[1]
          \State Choose random rebuild threshold $r \in \left(\frac{\delta^{-1}}{8m}, \frac{\delta^{-1}}{6m}\right)\cap \N$, initialize the trash can, buffer and main-body to be empty and consider this a ``free rebuild''.
          \If{\emph{at any time} there are fewer than $r$ remaining valid blocks}
          \State Perform a rebuild:
          \State Stop the current operation.
          \State Compact all the items, eliminating all gaps. 
          \State Randomly permute the items.
          \State Logically partition the items into \defn{blocks}
          of $m$ contiguous items, starting from the end of memory.
          \State Mark all blocks as \defn{valid}.
          \State Set the trash can to be empty.
          \State Resample $r \in \left(\frac{\delta^{-1}}{8m}, \frac{\delta^{-1}}{6m}\right)\cap \N$
          \EndIf

          \If{an item $I$ is inserted}
          \State Place $I$ immediately after the currently final item and add $I$ to the trash can.
          \ElsIf{an item $I$ is deleted}

          \State Let $Y$ be a set of items contiguous with $I$, including $I$, whose total size $y$ satisfies $y\in \frac{3}{4}m\delta + [-\delta,\delta]$; this is possible because the maximum item size is $2\delta$.
            If $I$ is not in the
          trash can choose these items to be in the same block as $I$.

          \While{there is no subset of the final valid block with
          total size in $[y-g,y]$}
          \State Invalidate the final valid block.
          \EndWhile
          \State Let $B$ be the final valid block and let $S$ be
          a subset of $B$ with total size in $[y-g,y]$.
          \State Take items $S$ and arrange them contiguously in the region of
          memory where items $Y$ used to be, leaving a gap of size at most $g$.
          \State Take items $(Y\setminus \set{I}) \cup (B\setminus S)$ and
          arrange them contiguously in the region of memory that was occupied
          by block $B$
          \State Invalidate all blocks involved in the swap.
          \State Remove item $I$ from memory, introducing a gap
          in block $X$.
          \State Take the block $X$ and all blocks to its
          right not yet in the trash can and move all of these
          items into the trash can, compressing them to eliminate
          gaps between them.
          \If{the buffer has now grown too large}
          \State Take an item from the end of the trash can and
          swap it to the beginning of the trash can.
          \EndIf
          
          \EndIf
        \end{algorithmic}
    \end{algorithm}

We give pseudocode for \trash in \cref{alg:random}.
Now we verify that \trash is well-defined and analyze \trash's performance.
\begin{lemma}
   \trash places items in valid locations.
\end{lemma}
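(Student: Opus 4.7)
The plan is to establish two properties: (i) items always occupy pairwise disjoint intervals, and (ii) at any time when the total item size is $L$, all items lie within $[0, L+\eps]$ (the resizable guarantee). Property (i) is essentially syntactic: a case analysis through each step of the algorithm (insert, swap, push to trash can, buffer rotation, rebuild) shows that moved items are placed contiguously into regions vacated by previously-moved items, so no overlaps can occur. For property (ii), the strategy is to show that the total ``slack'' between the used interval of memory and the sum of item sizes is at most $\eps$. This slack decomposes into two parts: the buffer between the main body and the trash can, and the total size of the gaps created by swap operations.

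For the buffer, the algorithm explicitly maintains the bound by rotating items from the end of the trash can to its beginning whenever the buffer would otherwise grow beyond $\eps/2$. The main substantive claim is thus that the total gap size between consecutive rebuilds is at most $\eps/2$. Each swap creates at most one gap, located where items $Y$ used to be and now occupied by items $S$; this gap has size $y - z \le g = \eps\delta\log\eps^{-1}$ by the choice of $S$ (which is guaranteed to have sum in $[y-g, y]$). All other ``holes'' opened by a swap lie within the region formerly occupied by block $B$, but since $B$ together with everything to its right in the main body are pushed into the trash can and compacted against its start, those holes are eliminated. Moreover, every swap invalidates at least the block $B$, so the number of swaps between rebuilds is bounded above by the total number of block invalidations. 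Because rebuilds are triggered when the number of valid blocks drops below the threshold $r \ge \delta^{-1}/(8m)$, and each cycle begins with $\lfloor\delta^{-1}/(4m)\rfloor$ valid blocks, at most $O(\delta^{-1}/\log\eps^{-1})$ invalidations can occur between rebuilds. This is within the stated budget of $\delta^{-1}/(2\log\eps^{-1})$ gaps, and multiplying by $g$ gives a total gap size bounded by $\eps/2$.

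The main obstacle in this argument is carefully tracking which of the holes opened during a swap persist in the allocation and which are eliminated by the compression that accompanies pushing blocks into the trash can. Once one isolates the fact that the only surviving hole per swap is the $\le g$ gap at the former location of $Y$, the accounting becomes clean. A minor secondary check is that after each rebuild the layout is gap-free, which is immediate from the pseudocode instruction to ``compact all the items, eliminating all gaps.'' Combining the buffer bound with the gap bound yields total slack at most $\eps$, establishing the resizable guarantee.
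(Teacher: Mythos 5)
Your proposal is correct and follows essentially the same route as the paper: items remain contiguous except for the buffer (explicitly regulated to at most $\eps/2$) and the swap gaps, each swap leaves one surviving gap of size at most $g$ while invalidating at least one block, and since a rebuild occurs before all $\approx \delta^{-1}/(4m)$ blocks are invalidated, the total gap size stays below $\eps/2$. Your additional bookkeeping of why the hole at block $B$'s former location is absorbed by the push-and-compact into the trash can is a detail the paper leaves implicit, but the argument is the same.
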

\begin{proof}
Note that the items present are always kept contiguous except for small gaps
introduced by swaps and the buffer between the main-body and the trash can. Each
swap creates wasted space at most $g$ and invalidates at least $1$ block. 
The total number of blocks is $\floor{\ceil{\delta^{-1}/4}/m}$. 
\trash certainly rebuilds before all blocks are invalidated. Hence, the wasted space never exceeds
$\floor{\ceil{\delta^{-1}/4}/m} g \le \eps/2$.
\trash regulates the size of the buffer to be at most $\eps/2$, so this ensures
that if there is $L$ total size of items present at some point in time then the
items fit in the space $[0,L+\eps]$. 
\end{proof}

\begin{lemma}\label{lem:rsumanalysis}
\trash's worst-case expected update cost is $\bigO(\log \eps^{-1})$.
The set of items to move at each update
by \trash can be computed in expected time $\bigO(\eps^{-1/2})$.
\end{lemma}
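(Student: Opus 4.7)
The plan is to verify both claims by first establishing a single structural observation: at any time during \trash's execution, the items contained in the valid blocks are distributed as IID uniform samples from $[\delta, 2\delta]$, and items in distinct valid blocks are mutually independent. My argument for this will rest on (i) the defining property of $\delta$-random-item sequences (after the initial phase, item sizes are IID uniform in $[\delta, 2\delta]$), (ii) the fact that each rebuild randomly permutes all items before grouping them into blocks, and (iii) the invariant that a valid block is never touched by subsequent operations until it is invalidated---only the trash can receives inserts, and only invalidated blocks participate in swaps or are pushed. By exchangeability, the items surviving in valid blocks will remain IID uniform conditional on which items have been deleted or moved out.

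With this in hand, I will apply \cref{thm:randomthm} with $n = \eps^{-1}$ and items rescaled by $1/\delta$. Each valid block contains $m = 2\ceil{(\log \eps^{-1})/2}$ IID uniform samples from $[1,2]$ after rescaling, the target $y/\delta$ lies in $(3/4)m + [-1,1]$, and the tolerance $g/\delta = \eps \log \eps^{-1}$ matches the $(\log n)/n$ precision of the theorem. Hence each valid block will be compatible with the target set $Y$ with probability $\Omega(1)$, and since blocks are mutually independent, the search phase of a delete will invalidate only $\bigO(1)$ blocks in expectation before locating a suitable $B$.

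I will then decompose the expected update cost into four contributions: (i) the swap itself, which moves $\bigO(m)$ items of size $\Theta(\delta)$, yielding cost $\bigO(m) = \bigO(\log \eps^{-1})$; (ii) pushing the $\bigO(1)$ expected blocks invalidated during the search into the trash can, contributing another $\bigO(m)$; (iii) the rotations that restore the buffer size, which move only $\bigO(1)$ items per update in expectation since each swap changes the main-body's occupied length by $\bigO(\delta)$; and (iv) the amortized cost of rebuilds. For (iv) I will mimic the argument used for \geo: the rebuild threshold $r$ is drawn uniformly from a range of width $\Omega(\delta^{-1}/m)$, and each update contributes an $\bigO(1)$-expected increment to the cumulative invalidated-block count. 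By an analogue of \cref{lem:hittingcontinuouscats} applied to that count, any given update triggers a rebuild with probability $\bigO(m\delta)$; since each rebuild has cost $\bigO(\delta^{-1})$ (moving $\bigO(1)$ total size while being triggered by an update of size $\Theta(\delta)$), the expected rebuild cost per update will be $\bigO(m\delta \cdot \delta^{-1}) = \bigO(\log \eps^{-1})$.

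For the running-time bound I plan to solve the subset-sum subproblem on each checked valid block via meet-in-the-middle: split the $m$ items into halves of size $m/2$, enumerate the $2^{m/2} = \bigO(\eps^{-1/2})$ subset sums of each half (precomputed once when the block is formed at rebuild time and reused across the block's lifetime), hash the second-half sums, then for each first-half sum $s_1$ look up in $\bigO(1)$ expected time whether some second-half sum $s_2$ satisfies $s_1 + s_2 \in [y-g, y]$. This will run in $\bigO(\eps^{-1/2})$ expected time per checked block, which combined with $\bigO(1)$ expected blocks checked gives $\bigO(\eps^{-1/2})$ total expected time. The hardest part of the plan will be carefully verifying the distributional independence that makes \cref{thm:randomthm} applicable under an adversarial deletion schedule; I will handle this by arguing that a random deletion of an exchangeable item preserves IID uniformity of the surviving items conditional on their count, so that the items remaining in each valid block can be treated as fresh IID samples from $[\delta, 2\delta]$ at every compatibility check.
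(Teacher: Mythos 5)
Your plan is essentially the paper's own proof: the same ``purity of valid blocks'' independence argument feeding \cref{thm:randomthm}, the same four-way cost decomposition (swap, push, buffer rotation, rebuild), and the same meet-in-the-middle subset-sum computation giving $\bigO(2^{m/2})=\bigO(\eps^{-1/2})$ time per check. Two soft spots are worth flagging. First, your push-cost accounting is incomplete: when \trash pushes, it moves block $B$ \emph{and every block to its right} into the trash can, and those can include blocks invalidated on \emph{earlier} updates that were never pushed (e.g., blocks in which earlier deletes occurred). To bound the expected number pushed on a fixed update $u$ you need one more observation -- the paper uses that at most half the blocks are ever invalid and that, because delete locations are uniformly random, the invalid blocks are uniformly distributed across the main body, so the stale invalid blocks swept up by a push number at most a constant times the freshly invalidated ones in expectation. (A purely amortized ``each block is pushed once per phase'' argument does not suffice, since the lemma claims a worst-case expected per-update bound.) Second, for the rebuild probability you propose an analogue of \cref{lem:hittingcontinuouscats} applied to the cumulative invalidation count; this is more delicate than the \geo setting because the increments of that count in later phases are not independent of earlier thresholds (each rebuild re-randomizes the items, and the rebuild times are determined by the thresholds). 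The paper sidesteps this by a Chernoff bound showing every phase is long with exponentially good probability, then union-bounding over the $\bigO(1)$ rebuilds that can occur in a window of $L$ updates around $u$, obtaining the same $\bigO(1/L)$ trigger probability and $\bigO(\delta^{-1}/L)=\bigO(\log\eps^{-1})$ expected rebuild cost. Your route can likely be repaired by conditioning appropriately, but the Chernoff-plus-union-bound argument is the cleaner path.
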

\begin{proof}
\trash clearly has cost $O(1)$ per insert.
Before analyzing the expected cost of deletes, we analyze the rate at which
blocks are invalidated: this will dictate the cost of rebuilds.

We will show using \cref{thm:randomthm} that in expectation only $O(1)$ valid
blocks must be checked before finding a compatible valid block to handle each delete.
Fix some delete.
Let $y\in \frac{3}{4}m\delta + [-\delta, \delta]$ be the size of the set of
items $Y$ contiguous with the deleted item which we aim to swap. 
Suppose we are given a set $X$ of $m$ items with sizes 
chosen uniformly randomly and independently from $[\delta, 2\delta]$.
We claim that with constant probability there is a subset $X'\subset X$ such that 
$\sum_{x\in X'} x \in [y-g, y]$. 
This follows immediately from \cref{thm:randomthm}, with all sizes scaled down by a factor of $\delta$.

Intuitively this means that the expected number of valid blocks \trash looks at
on each delete should be $\bigO(1)$. Now we formalize this intuition.
Define a \defn{phase} to be the set of updates between rebuild steps.
Note that the set of items present is highly correlated between phases, so great
care is needed. 
However, we will argue that \trash's periodic rebuild operations, where \trash
randomly permutes all present items, guarantee the following property:
Let $C_i$ denote the event that the $i$-th check of a valid block's compatibility during a fixed phase succeeds. Then for all distinct $i,j$ the events $C_i,C_j$ are independent and occur each with probability $\Omega(1)$.
We call this property the ``\defn{purity of valid blocks}''.

We now argue why the purity of valid blocks property holds. If a block is valid,
it means that \trash has not touched or even looked at the items in the block
during the phase so far. Since the set of items sizes present at the start of
the phase is equivalently distributed to randomly sampled items, the sizes of
the items in each valid block is equivalently distributed to randomly sampled
items, as their randomness has not been spoiled. Thus, the events $C_i$ are
indeed independent random variables, and occur with probability $\Omega(1)$ by
the argument above (i.e., applying \cref{thm:randomthm}).

Thus, the expected number of blocks that \trash invalidates on each delete is
the expectation of a geometric random variable with probability $\Omega(1)$ of
occurring and hence is $\bigO(1)$. In particular this implies that the expected
number of steps before there are fewer than $r$ valid blocks is $\Omega(\delta^{-1}/m)$.  

Now we analyze the expected cost of update $u$.
There are four costs that we must analyze: the costs due to (1) rebuilding, 
(2) swapping items to handle deletes, (3) pushing items into the trash
can, and (4) rotating items to make the buffer sufficiently small.
Intuitively, because each phase has expected length $\Omega(\delta^{-1}/m)$ 
and because the rebuild threshold $r$ is random, the
expected cost of rebuilding per update is $O(\log\eps^{-1})$; we formalize this \cref{lem:scarycats} after discussing the other costs.

The swap operation has cost $O(m)\le O(\log\eps^{-1})$ because there are $O(m)$
items amongst the two blocks involved in the swap.
Repairing the buffer has cost $O(1)$: it requires moving at most $O(1)$ items.
Now we analyze the cost of pushing items into the trash can.
Using the purity of valid blocks property we have that every delete
decreases the number of valid blocks by at most $\bigO(1)$ in expectation. 
Since \trash always rebuilds before the number of valid blocks drops below
$\delta^{-1}/(8m)$ at most $1/2$ of the blocks are invalid at any
point. Since the delete locations are uniformly random, the subset of blocks that
are invalid is uniformly distributed in the main-body, conditional on its size.
Thus, in expectation the number of blocks that \trash must push to the trash can
on update $u$ is at most twice the number of blocks it invalidates.
As \trash invalidates $O(1)$ expected blocks, it 
only pushes $O(1)$ expected blocks to the trash can in total, for which it incurs cost $O(\log\eps^{-1})$.

Now we formally analyze the expected cost due to rebuilding.
\begin{lemma}\label{lem:scarycats}
    The expected cost of rebuilding on update $u$ in \trash is $O(\log\eps^{-1})$.
\end{lemma}
\begin{proof}
Fix some update $u$. Let $L = \delta^{-1}/(8\log\eps^{-1})$ be the maximum
number of blocks that can be invalidated during a phase. Note that the minimum number of blocks that must be invalidated in each phase is $\delta^{-1}/(12\log\eps^{-1}) = 2L/3$.
Recall that by the purity of valid blocks property the random variables
$C_i$, indicating whether the $i$-th check for compatibility succeeds within
some fixed phase are independent and each occur with probability $p\ge \Omega(1)$.
Thus, by a Chernoff Bound, the \defn{length} of each phase, i.e., the number of
updates that are handled during that phase, is at least $Lp/50$ with probability
$1-e^{-\Omega(-L)}$. 
Thus, with exponentially good probability there are at most $\ceil{50/p}$
rebuilds during the interval $[u-L, u]$.
For each $i\in [\ceil{50/p}]$ the chance that $u$ is responsible for the $i$th rebuild in $[u-L, u]$ is at most $O(1/L)$. Applying a union bound we see
that update $u$ is responsible for a rebuild with probability at most $O(1/L)$. Note that it is impossible for $u$ to be responsible for multiple rebuilds. Thus, the expected cost of performing a rebuild on update $u$ is at most
$$O(\delta^{-1}/L)\le O(\log\eps^{-1}).$$
\end{proof}

Now we analyze the expected running time required to compute \trash's strategy.
The running time is dominated by the expected $O(1)$ times that \trash must
check if a valid block is compatible to handle the delete. Each such check can
be performed by computing all subset sums of the $m$ item sizes in the valid
block that it is checking. This requires $\bigO(\eps^{-1/2})$ time by using the
meet-in-the-middle algorithm for finding subset sums.
\end{proof}

In the above analysis we have assumed $\delta\le \eps/4$ for simplicity of
exposition. The only place we used this assumption is in constructing the buffer
that separates the trash can from the main-body: a simple buffer requires an
items-worth of slack. 
We now show how to handle the
regime $\delta > \eps/4$ as well. 

\begin{lemma}
\trash can be modified to work for $\delta > \eps/4$.
\end{lemma}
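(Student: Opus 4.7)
The plan is to modify \trash to accommodate the only obstruction from the original construction: when $\delta > \eps/4$, the $\eps/2$ empty buffer is smaller than a single item of size up to $2\delta$, so the buffer-shrinking procedure based on rotating one trash-can item at a time through the physical gap is no longer valid. However, in this regime there are only $n \le (1-\eps)/\delta \le O(\eps^{-1})$ items total, and indeed $n \le O(\delta^{-1})$, so we have some slack to spend.

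First, I would replace the purely empty buffer with a hybrid boundary structure: a small empty region of size at most $\eps/4$, together with a constant number of designated \emph{flex items} drawn from the leftmost end of the trash can. Each flex item has size in $[\delta,2\delta]$, and they straddle the logical boundary between the main-body-adjacent and rest-of-trash-can regions. The invariant we maintain is that the combined contribution of the empty gap and the ``unused'' portion of the flex items to wasted space is at most $\eps/2$.

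Second, I would modify the buffer adjustment step in the delete procedure. Whenever a push into the trash can grows the effective empty gap beyond $\eps/4$, instead of attempting to rotate an entire item across a too-small gap, we exchange one flex item with an item pulled from the right end of the trash can. Because both items have sizes in $[\delta,2\delta]$ and the available gap plus the displaced flex item together exceed $2\delta$, the exchange is always feasible; and because each such exchange shifts the boundary by $\Theta(\delta)$, only $O(1)$ exchanges are needed per delete to restore the invariant. Each exchange moves $O(1)$ items of size $\Theta(\delta)$, so it contributes only $O(1)$ to the cost of the update.

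The main obstacle is the free-space accounting. I need to verify that the sum of (i) the empty gap, (ii) the slack created by flex items being imperfectly aligned with the trash-can boundary, and (iii) the per-phase swap waste from \cref{lem:rsumanalysis} never exceeds $\eps$. The flex items contribute at most $O(\delta) = O(\eps)$ of slack, and by tightening the rebuild threshold by a small constant factor the swap-waste bound can be reduced to $\eps/8$, leaving room for all three. The purity-of-valid-blocks property is preserved because flex items are drawn exclusively from the trash can and never from valid main-body blocks, so the analyses of \cref{lem:rsumanalysis} and \cref{lem:scarycats} extend unchanged. The $\widetilde O(\eps^{-1/2})$ running-time bound via meet-in-the-middle also carries over, since the set of candidate items to move still lives inside a single valid block of $m$ items.
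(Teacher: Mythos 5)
Your proposal has a genuine gap, and it is exactly at the point where the regime $\delta>\eps/4$ is hard. The core difficulty is one of additive precision: the boundary between the main-body and the trash can must be realized to within $\eps/2$ of wasted space, using items whose sizes are all in $[\delta,2\delta]$ with $\delta\gg\eps$ possible (recall $\delta=\poly(\eps)$ only means $\delta=\eps^{\Theta(1)}$, so e.g.\ $\delta=\eps^{1/2}$). Your accounting step asserts that the flex items contribute ``at most $O(\delta)=O(\eps)$ of slack,'' but $O(\delta)=O(\eps)$ is false in this regime; even a single item's worth of unused boundary space ($\Theta(\delta)$) already exceeds the entire $\eps$ free-space budget, which is precisely why the naive buffer fails. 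Moreover, the claim that each flex-item exchange ``shifts the boundary by $\Theta(\delta)$'' and hence $O(1)$ exchanges restore the invariant does not follow: you must land the gap inside a window of width $O(\eps)$, and with a constant number of flex items the set of achievable boundary positions is a set of $2^{O(1)}$ points spread over a range of width $\Theta(\delta)$, so the probability that any of them falls in the required window is $O(\eps/\delta)=o(1)$. Feasibility of physically performing an exchange is not the issue; guaranteeing the post-exchange gap is small is, and $O(1)$ random items do not have enough subset-sum density to do it.

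The paper's fix supplies exactly the missing precision by reusing \cref{thm:randomthm}: after pushing blocks into the trash can, it stashes the final valid block $B$ (a full block of $m=\Theta(\log\eps^{-1})$ untouched random items), rotates whole items from the back of the trash can to the front until the gap equals some $y\in(3/4)m\delta+[-\delta,\delta]$, and then finds a subset $S\subseteq B$ with $\sum_{x\in S}x\in[y-\eps/2,\,y]$ -- this succeeds with constant probability because a block of $m$ random sizes has $\binom{m}{m/2}\approx\eps^{-1}$ candidate subsets, enough to hit a target window of width $\eps/2$. Items $S$ go at the front of the trash can (filling the gap to within $\eps/2$) and $B\setminus S$ at the back; on failure the block is invalidated and the next valid block is tried, costing $O(1)$ expected attempts by the purity-of-valid-blocks argument. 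If you want to salvage your construction, replace the $O(1)$ flex items by a full valid block and select the boundary-filling set via the subset-sum theorem; without that, the approach cannot meet the resizability constraint.
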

\begin{proof}
We now describe a more complicated buffer management strategy that allows \trash to handle $\delta>\eps/4$.
Fix some delete.
After pushing items into the trash can on this delete \trash\xspace \defn{stashes} the
final valid block from the main-body: that is, \trash temporarily considers this
block to not be contained in memory. 
Then \trash rotates items from the back of the trash can to the front until the
distance between the main-body and the start of the trash can is some value
$y \in (3/4)m\delta +[-\delta,\delta]$. 
Then, \trash attempts to find a subset $S$ of the stashed block summing to a
value in the range $[y-\eps/2, y]$. 
\trash will succeed with constant probability due to \cref{thm:randomthm}, which
applies because $\eps/2 > g$ due to $\delta=\poly(\eps)$.
If \trash fails, it invalidates the stashed valid block, pushes it and all
blocks to its right into the trash can, and redoes the the stashing and cycling
steps from above using the next valid-block. 
By the same analysis as in \cref{lem:rsumanalysis} in expectation it takes at
most $O(1)$ tries before \trash successfully finds a valid block with a subset
$S$ summing to a value in the range $[y-\eps/2, y]$.
Suppose \trash finds a block with items $B$ that has a subset $S$ with the desired sum.
\trash then places all items $B$ in the trash can.
However, \trash places items $S$ at the front of 
the trash can and items $B\setminus S$ at the end of the trash can. 
Then, the gap between the main-body and the trash can is of size at most
$\eps/2$, as desired.

Clearly this more complex buffer management scheme increases the cost of \trash's
updates and the time required to compute \trash's strategy by at most constant factors.

\end{proof}

\end{proof}

\section{Conclusion}
Our main contribution in this paper is an allocator for the memory reallocation problem achieving expected update cost $\widetilde{\bigO}(\eps^{-1/2})$. 
However, there are several indications that it should be possible to construct an allocator with much lower expected update cost.

Kuszmaul has already established that if all items are smaller than $\eps^4$
then there is an allocator with expected update cost $O(\log \eps^{-1})$.
Using similar techniques to the covering sets introduced in this paper one can see that there are efficient allocators for sets of items with few distinct sizes and where and all sizes are fairly similar. Combined with the standard technique of discretizing item sizes this approach becomes even more powerful.
Thus, ``structured'' sets of items can be handled efficiently.
On the other hand, we gave an allocator that achieves update cost $\bigO(\log
\eps^{-1})$ for large stochastic items.
Thus, it seems plausible that there is a ``structure versus randomness'' dichotomy that can be exploited to achieve better allocators for arbitrary items.
We leave constructing an allocator with expected update cost $o(\eps^{-1/2})$, or strengthening our lower bound, as open problems.

% results that aren't in this paper: (see extra_small_items.tex, extra_zt_stuff.tex)
% someday I will polish these and add them to an appendix!
% 1. "FISH" algorithm for t-color zero-tolerance
% 2. $\O(\log \eps^{-1})$ cost if all item sizes are smaller than $\eps^{2.01}$.
% 3. Our allocator can be de-randomized (i.e., remove the word expected from it)
% and the open problems listed are indeed intruiging, definitely worth coming back to at some point.

\bibliographystyle{ACM-Reference-Format}
\bibliography{refs}

%%% -*-BibTeX-*-
%%% Do NOT edit. File created by BibTeX with style
%%% ACM-Reference-Format-Journals [18-Jan-2012].

\begin{thebibliography}{13}

%%% ====================================================================
%%% NOTE TO THE USER: you can override these defaults by providing
%%% customized versions of any of these macros before the \bibliography
%%% command.  Each of them MUST provide its own final punctuation,
%%% except for \shownote{}, \showDOI{}, and \showURL{}.  The latter two
%%% do not use final punctuation, in order to avoid confusing it with
%%% the Web address.
%%%
%%% To suppress output of a particular field, define its macro to expand
%%% to an empty string, or better, \unskip, like this:
%%%
%%% \newcommand{\showDOI}[1]{\unskip}   % LaTeX syntax
%%%
%%% \def \showDOI #1{\unskip}           % plain TeX syntax
%%%
%%% ====================================================================

\ifx \showCODEN    \undefined \def \showCODEN     #1{\unskip}     \fi
\ifx \showDOI      \undefined \def \showDOI       #1{#1}\fi
\ifx \showISBNx    \undefined \def \showISBNx     #1{\unskip}     \fi
\ifx \showISBNxiii \undefined \def \showISBNxiii  #1{\unskip}     \fi
\ifx \showISSN     \undefined \def \showISSN      #1{\unskip}     \fi
\ifx \showLCCN     \undefined \def \showLCCN      #1{\unskip}     \fi
\ifx \shownote     \undefined \def \shownote      #1{#1}          \fi
\ifx \showarticletitle \undefined \def \showarticletitle #1{#1}   \fi
\ifx \showURL      \undefined \def \showURL       {\relax}        \fi
% The following commands are used for tagged output and should be
% invisible to TeX
\providecommand\bibfield[2]{#2}
\providecommand\bibinfo[2]{#2}
\providecommand\natexlab[1]{#1}
\providecommand\showeprint[2][]{arXiv:#2}

\bibitem[Alon and Spencer(2016)]%
        {alon2016probabilistic}
\bibfield{author}{\bibinfo{person}{Noga Alon} {and} \bibinfo{person}{Joel~H
  Spencer}.} \bibinfo{year}{2016}\natexlab{}.
\newblock \bibinfo{booktitle}{\emph{The probabilistic method}}.
\newblock \bibinfo{publisher}{John Wiley \& Sons}.
\newblock


\bibitem[Bender et~al\mbox{.}(2013)]%
        {bender2013reallocation}
\bibfield{author}{\bibinfo{person}{Michael~A Bender}, \bibinfo{person}{Martin
  Farach-Colton}, \bibinfo{person}{S{\'a}ndor Fekete},
  \bibinfo{person}{Jeremy~T Fineman}, {and} \bibinfo{person}{Seth Gilbert}.}
  \bibinfo{year}{2013}\natexlab{}.
\newblock \showarticletitle{Reallocation problems in scheduling}. In
  \bibinfo{booktitle}{\emph{Proceedings of the twenty-fifth annual ACM
  symposium on Parallelism in algorithms and architectures}}.
  \bibinfo{pages}{271--279}.
\newblock


\bibitem[Bender et~al\mbox{.}(2015)]%
        {bender2015cost}
\bibfield{author}{\bibinfo{person}{Michael~A Bender},
  \bibinfo{person}{Mart{\'\i}n Farach-Colton}, \bibinfo{person}{S{\'a}ndor~P
  Fekete}, \bibinfo{person}{Jeremy~T Fineman}, {and} \bibinfo{person}{Seth
  Gilbert}.} \bibinfo{year}{2015}\natexlab{}.
\newblock \showarticletitle{Cost-oblivious reallocation for scheduling and
  planning}. In \bibinfo{booktitle}{\emph{Proceedings of the 27th ACM symposium
  on Parallelism in Algorithms and Architectures}}. \bibinfo{pages}{143--154}.
\newblock


\bibitem[Bender et~al\mbox{.}(2017)]%
        {bender2017cost}
\bibfield{author}{\bibinfo{person}{Michael~A Bender}, \bibinfo{person}{Martin
  Farach-Colton}, \bibinfo{person}{S{\'a}ndor~P Fekete},
  \bibinfo{person}{Jeremy~T Fineman}, {and} \bibinfo{person}{Seth Gilbert}.}
  \bibinfo{year}{2017}\natexlab{}.
\newblock \showarticletitle{Cost-oblivious storage reallocation}.
\newblock \bibinfo{journal}{\emph{ACM Transactions on Algorithms (TALG)}}
  \bibinfo{volume}{13}, \bibinfo{number}{3} (\bibinfo{year}{2017}),
  \bibinfo{pages}{1--20}.
\newblock


\bibitem[Kuszmaul(2023)]%
        {Ku23}
\bibfield{author}{\bibinfo{person}{William Kuszmaul}.}
  \bibinfo{year}{2023}\natexlab{}.
\newblock \showarticletitle{Strongly History Independent Storage Allocation:
  New Upper and Lower bounds.}
\newblock \bibinfo{journal}{\emph{FOCS}} (\bibinfo{year}{2023}).
\newblock


\bibitem[Lim et~al\mbox{.}(2015)]%
        {lim2015dynamic}
\bibfield{author}{\bibinfo{person}{Wei~Quan Lim}, \bibinfo{person}{Seth
  Gilbert}, {and} \bibinfo{person}{Wei~Zhong Lim}.}
  \bibinfo{year}{2015}\natexlab{}.
\newblock \showarticletitle{Dynamic Reallocation Problems in Scheduling}.
\newblock \bibinfo{journal}{\emph{arXiv preprint arXiv:1507.01981}}
  (\bibinfo{year}{2015}).
\newblock


\bibitem[Luby et~al\mbox{.}(1996)]%
        {luby1996tight}
\bibfield{author}{\bibinfo{person}{Michael~G Luby}, \bibinfo{person}{Joseph
  Naor}, {and} \bibinfo{person}{Ariel Orda}.} \bibinfo{year}{1996}\natexlab{}.
\newblock \showarticletitle{Tight bounds for dynamic storage allocation}.
\newblock \bibinfo{journal}{\emph{SIAM Journal on Discrete Mathematics}}
  \bibinfo{volume}{9}, \bibinfo{number}{1} (\bibinfo{year}{1996}),
  \bibinfo{pages}{155--166}.
\newblock


\bibitem[Lueker(1998)]%
        {Lu98}
\bibfield{author}{\bibinfo{person}{George~S Lueker}.}
  \bibinfo{year}{1998}\natexlab{}.
\newblock \showarticletitle{Exponentially small bounds on the expected optimum
  of the partition and subset sum problems}.
\newblock \bibinfo{journal}{\emph{Random Structures \& Algorithms}}
  \bibinfo{volume}{12}, \bibinfo{number}{1} (\bibinfo{year}{1998}),
  \bibinfo{pages}{51--62}.
\newblock


\bibitem[Naor and Teague(2001)]%
        {naor2001anti}
\bibfield{author}{\bibinfo{person}{Moni Naor} {and} \bibinfo{person}{Vanessa
  Teague}.} \bibinfo{year}{2001}\natexlab{}.
\newblock \showarticletitle{Anti-persistence: History independent data
  structures}. In \bibinfo{booktitle}{\emph{Proceedings of the thirty-third
  annual ACM symposium on Theory of computing}}. \bibinfo{pages}{492--501}.
\newblock


\bibitem[Robson(1971)]%
        {robson1971estimate}
\bibfield{author}{\bibinfo{person}{John~Michael Robson}.}
  \bibinfo{year}{1971}\natexlab{}.
\newblock \showarticletitle{An estimate of the store size necessary for dynamic
  storage allocation}.
\newblock \bibinfo{journal}{\emph{Journal of the ACM (JACM)}}
  \bibinfo{volume}{18}, \bibinfo{number}{3} (\bibinfo{year}{1971}),
  \bibinfo{pages}{416--423}.
\newblock


\bibitem[Robson(1974)]%
        {robson1974bounds}
\bibfield{author}{\bibinfo{person}{John~Michael Robson}.}
  \bibinfo{year}{1974}\natexlab{}.
\newblock \showarticletitle{Bounds for some functions concerning dynamic
  storage allocation}.
\newblock \bibinfo{journal}{\emph{Journal of the ACM (JACM)}}
  \bibinfo{volume}{21}, \bibinfo{number}{3} (\bibinfo{year}{1974}),
  \bibinfo{pages}{491--499}.
\newblock


\bibitem[Uspensky(1937)]%
        {Uspensky1937}
\bibfield{author}{\bibinfo{person}{J.~V. Uspensky}.}
  \bibinfo{year}{1937}\natexlab{}.
\newblock \bibinfo{booktitle}{\emph{Introduction to Mathematical Probability}}.
\newblock \bibinfo{publisher}{McGraw-Hill}, \bibinfo{address}{New York}. 305
  pages.
\newblock


\bibitem[Zhao(2023)]%
        {zhao2023graph}
\bibfield{author}{\bibinfo{person}{Yufei Zhao}.}
  \bibinfo{year}{2023}\natexlab{}.
\newblock \bibinfo{booktitle}{\emph{Graph Theory and Additive Combinatorics:
  Exploring Structure and Randomness}}.
\newblock \bibinfo{publisher}{Cambridge University Press}.
\newblock


\end{thebibliography}

\appendix
\section{Omitted Lemmas}\label{sec:appendixobvious}
In this section we prove several lemmas omitted from the main paper.

\begin{reptheorem}{fact:ERFC}
Fix constants $a,b>0$.
  Let $x_1,\ldots, x_n \gets [0,1]$ be chosen uniformly
  randomly and independently. Then
  $$\Pr\left[\sum_{i=1}^{n} x_i \in [n/2-a,n/2+b] \right] =
  \Theta(1/\sqrt{n}).$$
\end{reptheorem}
\begin{proof}
According to \cite{Uspensky1937} there exists an absolute constant $C>0$ such
that the following holds.
Suppose $y_1,y_2,\ldots, y_n \gets [-1/2,1/2]$ are chosen uniformly randomly and
independently. Then for any $t>0$ we have:
\begin{equation}\label{eq:fancycontinuousequation}
\left|  \Pr\left[\left|\sum_{i=1}^n y_i \right| \le t \sqrt{n}\right]  - C \int_{-t}^t e^{-u^2/2} du \right| \le \bigO(1/n).
\end{equation}
Let $c\in \set{a,b}$ and take $t=c/\sqrt{n}$.
Then
$$\int_{-t}^{t}e^{-u^2/2} du = \Theta(1/\sqrt{n}).$$
Using this in \cref{eq:fancycontinuousequation} we find 
$$\Pr\left[\left|\sum_{i=1}^n y_i \right| \le c\right] = \Theta(1/\sqrt{n}).$$
By symmetry we then have
$$\Pr\left[\sum_{i=1}^n y_i  \in [0,c]\right] = \Pr\left[\sum_{i=1}^n y_i  \in [-c,0]\right] = \Theta(1/\sqrt{n}).$$
Summing the probability of the sum landing in either of $[0,b], [-a,0]$ and
translating the $y_i$'s by $+1/2$ and gives the desired bound. 
\end{proof}

\continuousCats*
\begin{proof}
If $b-a\ge W/4$ the statement is vacuously true. 
So, we may assume $b-a< W/4$.

For each $z\in \R$ let $\H(z)$ denote the event that there exists $j$ with
$\sum_{i\le j} x_i = z.$ Let $\H(a,b)$ denote the event that there exists $j$
with $\sum_{i\le j} x_i \in [a,b]$. Observe that $\H(a,b) = \int_a^b \H(z) dz.$
If $b\le W$ then $\Pr[\H(a,b)] \le 2(b-a)/W$. Suppose $b>W$.

In order for $\H(a,b)$ to happen the following must occur: there must be $j\in
\N$ and $z\in (a-W, b-W/2)$ such that $\sum_{i\le j}x_i = z$, and then $x_{j+1}$
must satisfy $z+x_{j+1}\in [a,b]$. Observe that events $\H(z), \H(z')$ are
disjoint if $|z-z'|\le W/2.$
Thus we have:
\[\Pr[\H(a,b)]\le \int_{a-W}^{b-W/2} \frac{2(b-a)}{W} \H(z) dz \le \frac{4(b-a)}{W}.\]
\end{proof}

\discreteCats*
\begin{proof}
    For any $z\in \Z$ let $\H(z)$ denote the event that there exists $j$ with
    $\sum_{i\le j}x_i = z$.
    If $\H(y)$ then we must either have $y\le \ceil{N/3}$ or else $\H(y-i)$ is
    true for some $i\in [\ceil{N/4}, \ceil{N/3}]$.
    In the case that $y\le \ceil{N/3}$ we clearly have $\Pr[\H(y)] \le 100/N$.
    Now, suppose $y>\ceil{N/3}$.
    Observe that $\setof{\H(y-i)}{i\in [\ceil{N/4}, \ceil{N/3}]}$ are disjoint events. Thus,
    $$\sum_{i=\ceil{N/4}}^{\ceil{N/3}} \Pr[\H(y-i)] \le 1.$$
    Thus we have
    \[ \Pr[\H(y)] \le \sum_{i=\ceil{N/4}}^{\ceil{N/3}} \frac{\Pr[\H(y-i)]}{\ceil{N/3}-\ceil{N/4}+1} \le 100/N. \]
\end{proof}

\end{document}